\theoremstyle{plain}
\newtheorem{thm}{Theorem}
\newtheorem{lem}{Lemma}
\newtheorem{prop}{Proposition}
\newtheorem{defn}{Definition}
\newtheorem{conj}{Conjecture}
  \providecommand\BibTeX{{%
    \normalfont B\kern-0.5em{\scshape i\kern-0.25em b}\kern-0.8em\TeX}}}
    \newcommand{\com}[1]{\textbf{\color{blue} (COMMENT: #1)}}
    \newcommand{\com}[1]{}
\begin{document}
\title{{\em Stay} or {\em Switch}: Competitive Online Algorithms for Energy Plan Selection in Energy Markets with Retail Choice}

\author{Jianing~Zhai,
        Sid~Chi-Kin Chau,~\IEEEmembership{Member,~IEEE}
        and~Minghua~Chen,~\IEEEmembership{Senior~Member,~IEEE}
\thanks{This paper is an extension of the conference version of ACM e-Energy, 2019 \cite{eenergy}.}        

\thanks{J. Zhai is with The Chinese University of Hong Kong, Shatin, N.T., Hong Kong, (e-mail:zj117@ie.cuhk.edu.hk).}
\thanks{S. C.-K. Chau is with Australian National University, Canberra, Australia, (e-mail:sid.chau@anu.edu.au).}
\thanks{M. Chen was with The Chinese University of Hong Kong, Shatin, N.T., Hong Kong. He is now with City University of Hong Kong ,Tat Chee Avenue
Kowloon, Hong Kong, (e-mail:minghua.chen@cityu.edu.hk).}
}

\maketitle
\begin{abstract} 
Customers are now able to switch energy plans among competitive energy retail  suppliers in several countries, such as the US, the UK, Australia. However, there appears subsided participation from residential customers in retail energy markets, despite the benefits of having more affordable prices and better savings, because energy plan selection is a complex online decision-making process with a large number of options. This paper tackles the online energy plan selection problem by providing effective competitive online algorithms. An online energy plan selection problem is formulated as a metrical task system with temporally dependent switching costs. For the case of constant cancellation fee, a deterministic 3-competitive online algorithm and a randomized 2-competitive online algorithm are devised. The competitive ratios of the respective algorithms are then shown to be the best possible. Our algorithm can be extended to the case with linear cancellation fee proportional to the remaining contract duration. Through empirical evaluations using real-world household and energy plan data, our deterministic online algorithm costs no more than an additional 37\% expense compared to the optimal offline algorithm, and our randomized online algorithm further reduces this gap to 9\%. 
\end{abstract}  
\begin{IEEEkeywords}
retail choice, energy markets, energy plans, online algorithms, competitive analysis.
\end{IEEEkeywords}
\section{Introduction} 
\label{sec:intro} 

\IEEEPARstart{E}{nergy} markets have been notoriously operated by monopolies with vertically integrated providers spanning energy generation, transmission and distribution. Recently, retail choice in energy markets was introduced to enable residential, industrial and commercial customers to select retail suppliers in electricity and natural gas in several countries like the US, the UK, Australia, New Zealand, Denmark, Finland, Germany, Italy, the Netherlands and Norway \cite{zhou17retail, MK16retail}. 

This trend is driven by deregulation to promote more competitive retail energy  markets, giving customers higher transparency and more options. Liberating the retail energy  markets to competitive suppliers not only allows more affordable tariff schemes and better savings to customers but also encourages more customer-oriented and flexible services from suppliers. Furthermore, the emergence of virtual power plants \cite{WML12vpp} where household PVs and batteries are aggregated as an alternate supplier, may become a new form of energy suppliers in competitive retail energy markets. 

In competitive retail energy markets (including electricity and natural gas), there is a separation between utility providers (who are responsible for the management of energy transmission and distribution infrastructure, and the maintenance for ensuring its reliability) and energy suppliers (who generate and deliver energy to utility providers). Energy suppliers are supposed to compete in an open marketplace by providing a range of services and tariff schemes. With retail choice, customers can compare different energy plans from multiple suppliers and determine the best energy plans that suit their needs. The switching from one supplier to another can be attained conveniently via an online platform, or through third-party assistance services. 

Competitive retail energy markets have been deployed worldwide. In the US, there are over 13 states offering  retail electricity choice \cite{zhou17retail, L18retail}. In particular, there are over 109 retail electric providers in Texas offering more than 440 energy plans, including 97 of which generated from all renewable energy sources \cite{doc17texas}. In the UK, there are over 73 electricity and natural gas suppliers \cite{ofgem18suppliers}. In Australia, there are over 33 electricity and natural gas retailers and brands \cite{AEMC18retail}. 

Despite the promising goals of retail choice in energy markets, there appears subsided participation, particularly from residential customers. Declining residential participation rates and diminishing market shares of competitive retailers have been reported in recent years \cite{eia18decline, GRSCH18retail}. While there are several reasons behind the subpar customer reactions, one identified major reason is the confusion and complication of the available energy plans in retail energy markets \cite{AEMC18retail}. First, there are rather complex and confusing tariff structures by energy retailers. It is not straightforward for average consumers to comprehend the details of consumption charges and different tariff schemes. Second, savings and incentives among energy retailers are not easy to compare. There are a variety of factors to determine the actual cost of energy plans, including usage patterns and peak behavior. Some discounts are conditional on ambiguous contract terms. Without discerning the expected benefits of switching their energy plans, most customers are reluctant to participate in retail energy markets. Third, there lack of proper evaluation tools for customers to keep track of their energy usage and expenditure. Last, the increasing market complexity with a growing number of retailers and agents obscures the benefits of retail choice. For example, using real-world official datasets, we found 165 energy plans available in Buffalo, the US \cite{ny.gov}, and 434 electricity plans available in Sydney, Australia \cite{gov.au}. A large number of available energy plans cause considerable confusion and complexity to average customers. 

To bolster customer participation, several government authorities and regulators have launched websites and programs to educate customers on the benefits of retail choice in energy markets \cite{powertochoose, powertochooseNYS, energymadeeasy, ofgem18compare}. Recently, a number of start-up companies emerged to capitalize on the opportunities of retail choice in energy markets by providing assistance services and online tools to automatically determine the best energy plans for customers and to offer personalized selection advice. These assistance services and online tools are integral to the success of retail choice in energy markets by automating the confusing and complex decision-making processes of energy plan selections. In the future, household PVs, batteries, and smart appliances will optimize their usage and performance in conjunction with energy plan selection. Therefore, we anticipate the importance of proper decision-making processes for energy plan selection in energy markets with retail choice.

There are several challenging research questions arisen in the decision-making processes for energy plan selection: 
\begin{enumerate} 
	
	\item 
	{\bf Complex Tariff Structures:} There are diverse tariff structures and properties in practical energy plans. For example, the tariff schemes may have different contract periods (e.g., 6 months, one year, two years). There may be different time-of-use and peak tariffs as well as dynamic pricing depending on renewable energy sources. Also, it is common to have various administrative fees, such as connection, disconnection, setup, and cancellation fees. When there are rooftop PVs or home batteries, there will be feed-in tariffs for injecting electricity into the grid. In some countries, different appliances can be charged by different rates (e.g., boilers and heaters are charged differently). 
	
	\item 
	{\bf Uncertain Future Information:} To decide the best energy plan that may last for a long period, one has to estimate the future usage and fluctuation of energy tariffs. It is difficult to predict the uncertain information accurately to make the best decisions. For example, energy tariffs may depend on global energy markets. If there are rooftop PVs, their performance is conditional on unpredictable weather. The various sources of uncertainty complicate the decision-making processes of energy plan selection. 
	
	\item 
	{\bf Assurance of Online Decision-Making:} The energy plan selection decision-making processes are determined over time when new information is gradually revealed (e.g., the present demand and updated tariffs). A sequential decision-making process with a sequence of gradually revealed events is called {\em online decision-making}. The average customers are reluctant to switch to a new energy plan unless there is absolute assurance provided to their selection decisions. The online decision-making processes should incorporate proper bounds on the optimality of decisions as a metric of confidence for customers. 
	
\end{enumerate} 

In this paper, we shed light on the online energy plan selection problem with practical tariff structures by offering effective algorithms. Our algorithms enable automatic systems that monitor customers' energy consumption and newly available energy plans from an energy market with retail choice, and automatically recommend customers the best plans to maximize their savings. We note that a similar idea has been explored recently in practice \cite{theguardian}. 

Our results are based on {\em competitive online algorithms}. Decision-making processes with incomplete knowledge of future events are common in daily life. For decades, computer scientists and operations researchers have been studying sequential decision-making processes with a sequence of gradually revealed events and analyzing the best possible strategies regarding the incomplete knowledge of future events. For example, see \cite{BEY05online} and the references therein. Such strategies are commonly known as {\em online algorithms}. The analysis of online algorithms considering the worst-case impacts of incomplete knowledge of future events is called {\em competitive analysis}. Competitive online algorithms can assure online decision-making processes with uncertain future information. 

Our contributions are summarized in the following. 

\begin{enumerate}
    \item An online energy plan selection problem is formulated as a metrical task system with temporally dependent switching costs in Section \ref{sec:prob}.
    
    \item  For constant cancellation fee, the problem is reduced to a discrete online convex optimization problem. In Section \ref{sec:cons}, an optimal offline algorithm, a deterministic 3-competitive online algorithm, and a randomized 2-competitive online algorithm are presented for solving the energy plan selection problem. 
    
    \item Our algorithm can be extended to the case with linear cancellation fee proportional to the remaining contract duration in Section \ref{sec:dim}.

    \item Through empirical evaluations using real-world data in Section \ref{sec:empi}, our deterministic online algorithm has an average of 14.6\% in cost-saving compared to 16.2\% by optimal offline algorithm. Furthermore, our randomized online algorithm can further improve the cost-saving.
\end{enumerate}

\section{Related Work}
\label{sec:rela}

In this paper, problem \textbf{SP} lies in the domain of the Metrical Task System (MTS) problem \cite{BEY05online} and online convex optimization (OCO) problem with switching cost \cite{oco}.

For a general $n$ discrete states setting, the MTS problem is known to have a competitive ratio of $2n-1$ \cite{mts}. The energy generation scheduling in microgrids problem has been investigated in \cite{chase}, where the problem belongs to a subclass of MTS problems with convex objective function and linear switching cost. A 3-competitive online algorithm named CHASE is proposed and reaches the lower bound for any deterministic online algorithms. Remarkably, it shows that the optimal competitive ratio (termed as the price of uncertainty) for certain types of MTS problems can reduce from $2n-1$ to a constant $3$.

On the other hand, if the state space is continuous in \textbf{SP}, the problem belongs to the OCO problem with switching cost. In \cite{lcp}, an online algorithm named LCP is proved to be 3-competitive. Recently, a paper \cite{disc} shows that by proper rounding in LCP, the competitive ratio can still be maintained in the discrete state setting. Furthermore, it proves that 3 is the lower bound on the competitive ratio of deterministic online algorithms in the discrete setting. 

Moreover, \cite{disc} and \cite{disc-ext} propose a 2-competitive randomized online algorithm based on \cite{tight}. 
In this paper, we construct a uniform version out of the algorithms stated in these papers. Our new algorithm is easier to be understood and more practical to be implemented. 

Finally, we believe this is the very first study taking the temporally dependent switching cost into consideration. As a consequence, our problem is not only unique from others but also difficult to be dealt with.
\section{Problem Formulation} 
\label{sec:prob} 

In this section, we present a formal mathematical model of the energy plan selection problem. In the energy market, energy retailers offer various energy plans for households to choose from. This energy plan selection problem is stated in \textbf{EPSP}, which includes households' arbitrary demands, market time-varying prices, and plan cancellation fees. Although energy plans differ around the world, they have the same main character in choosing plans. As a typical model, electricity plans offered in the US are chosen.
Key notations are defined in Table \ref{tab:not}, and acronyms are listed in Table \ref{tab:acr}. 

\subsection{Model} 

\subsubsection{Uncertain Electricity Demands} Let the electricity demand of a household at time $t$ be $e_t$, which forms an arbitrary non-negative number sequence along timeline. Note that online algorithm design does not rely on any historical data or stochastic model of $e_t$. 

\subsubsection{Pricing Schemes} In the US, there are two major types of energy plans offered to households. Let $\mathcal{N}$ be the available plan set, and $s_t\in \mathcal{N}$ be the chosen plan at time $t$.

\begin{itemize} 
	\item {\em Variable-Rate Plan}: Variable-rate plans usually have relatively high costs with fluctuations according to the market. Switching to variable-rate plans do not incur any additional cost. Denote by $p^{s_t}_t$ the price per electricity unit for the variable-rate plan $s_t$ at time $t$. The energy cost at time $t$ is $p^{s_t}_t\cdot e_t$.
	
	\item {\em Fixed-Rate Plan}: Fixed-rate plans have less fluctuation in prices and are normally cheaper than variable-rate plans.
	However, if electricity usage has a large variation, fixed-rate plans will incur high charges. Further, the cancellation fee will be charged when terminating a fixed-rate plan. 
	
	Note that a specific tiered-pricing scheme is applied to fixed-rate plans \cite{eastcoast}. Let $B_t$ be the baseload for a household at time $t$, and $p^{s_t}_t$ be the price per electricity unit for the fixed-rate plan. A household pays $B_t\cdot p^{s_t}_t$ when electricity demand $e_t$ is between $0.9B_t$ and $1.1B_t$. Otherwise, an under-usage fee is charged at rate $U^{s_t}$ when $e_t$ is less than $0.9B_t$, or an over-usage fee is charged at rate $V^{s_t}$ when $e_t$ exceeds $1.1B_t$.
\end{itemize} 

\begin{table}[t] 
    \centering
	\caption{Key notation.} 
	\label{tab:not} 
	\begin{tabular}{@{}c@{}| p{.35\textwidth}} 
		\toprule 
		\textbf{Variable}&\textbf{Definition}\\ 
		\midrule 
		$T$ & The total number of time intervals (unit: month)\\ 
		$\sigma_t$ & The joint input at time $t$, i.e., $\sigma_t \triangleq (e_t, p_t^{s}, U^{s}, V^{s}, B_t)$\\ 
		$s_t$ & The selected plan at time $t$\\ 
		$e_t$ & The electricity demand of customer at time $t$ (kWh)\\
		$B_t$ & The base load at time $t$ (kWh)\\ 
		$p^s_t$ & The price per unit of electricity usage for plan $s$(\$ / kWh)\\ 
		$U^s$ & The under-usage charging rate for plan $s$ (\$ / kWh)\\
		$V^s$ & The under-usage charging rate for plan $s$ (\$ / kWh)\\ 
		$L^s$ & The total contract length for plan $s$ (month)\\
		$s_{[\tau,t]}$ & The chosen state sequence from time $\tau$ to time $t$, i.e., $s_{[\tau,t]}\triangleq (s_{\tau}, s_{\tau+1}, \cdots, s_{t})$\\ 
		$\beta(s_{[\tau,t]})$ & The cancellation fee at time $t$ (\$)\\ 
		$\gamma^s$ & The fixed cancellation fee for plan $s$ (\$)\\ 
		$\alpha^s$ & The cancellation fee for the residual time in the contract (\$ / month)\\
		$l_t$ & The remaining time in the contract (month)\\
		\bottomrule 
	\end{tabular} 
\end{table} 

\begin{table}[t]  
    \centering
	\caption{Acronyms for problems and algorithms.} 
	\label{tab:acr} 
	\begin{tabular}{@{}c|l@{}} 
		\toprule 
		\textbf{Acronym}&\textbf{Meaning}\\ 
		\midrule 
		\textbf{EPSP} & Energy Plan Selection Problem\\ 
		\textbf{SP} & Simplified version of \textbf{EPSP}\\ 
		\textbf{dSP} & \textbf{EPSP} with linearly decreasing switching costs\\ 
		OFA$_s$ & The optimal offline algorithm for \textbf{SP}\\
		gCHASE$_s$ & Generalized version of deterministic online \\
		& algorithm CHASE$_s$ for \textbf{SP}\\
		gCHASE$_s^r$ & Randomized version of gCHASE$_s$ for \textbf{SP}\\ 
		\bottomrule 
	\end{tabular} 
\end{table} 

Define the cost function $g_t(s_t)$ based on  input $\sigma_t \triangleq (e_t, p_t^{s}, U^{s}, V^{s}, B_t)$ and the selected plan $s_t$ at time $t$ as: 
{
\begin{numcases}{g_t(s_t) \triangleq } 
\label{equ:gt} 
\begin{aligned} 
& e_t \cdot p^{s_t}_t, \mbox{\ \ if } s_t \text{\ is a variable-rate plan}; \\ 
& e_t \cdot p^{s_t}_t +  V^{s_t} \cdot (e_t - 1.1B_t)^+  \\
& \qquad \qquad  + U^{s_t} \cdot (0.9B_t - e_t)^+, \\
& \mbox{\qquad \qquad if } s_t \text{\ is a fixed-rate plan}.
\end{aligned} 
\end{numcases} 
}

We claim our model can be extended to describe other energy markets, including Australia \cite{energymadeeasy}, as the cost function is determined only by input and the selected plan. In this paper, we focus on the US setting in \eqref{equ:gt}. 

\subsubsection{Contract Period} There are various contracts with different contract periods and restrictions available in the market. Variable-rate plans can be subscribed and withdrawn at any time, with contract period $L^{s_t} = 1$. On the other hand, retailers aim to retain consumers in a fixed-rate plan by stipulating a long contract length. Denote the contract length of a fixed-rate plan $s_t$ by $L^{s_t}$.

\subsubsection{Cancellation Fee} When a household cancels the fixed-rate plan, i.e. $s_t\neq s_{t-1}$, a cancellation fee of plan $s_{t-1}$ from the following types will incur: 
\begin{enumerate} 
	\item {\em Constant Fee}: A fixed cancellation fee $\gamma^{s_{t-1}}$ incurs when subscription time is less than the contract period.
	\item {\em Temporally Linear-dependent Fee}: a pre-specified fee $\alpha^{s_{t-1}}$ times the remaining period $l_t$ in the contract (e.g., \$10 per remaining month in the contract). 
\end{enumerate} 

We can define $\gamma$ and $\alpha$ for each plan $s$. Variable-rate plans have zero cancellation fee, i.e., $\gamma^s = \alpha^s = 0$. Constant cancellation fee plans have $\alpha^s = 0$. Temporally linear-dependent fee plans have $\gamma^s = 0$. According to the description, cancellation fee charged at time $t$ is calculated based on both $s_t$ and previous $L^{s_{t-1}}$ states. Define the cancellation fee incurred at time $t$ by $\beta(s_{[t-L^{s_{t-1}}:t]})$. We have
\begin{equation}
    \beta(s_{[t-L^{s_{t-1}}:t]}) \triangleq (\gamma^{s_{t-1}}+\alpha^{s_{t-1}}\cdot l_t)\cdot\bm{1}_{\{s_t\neq s_{t-1}\}}\cdot\bm{1}_{\{l_t\neq 0\}}
\end{equation}
where $\bm{1}_{\{\cdot\}}$ is the indicator function, and $l_t$ is the remaining period in the contract defined by
{
\begin{numcases}{ l_t \triangleq } 
\label{equ:l_t} 
\begin{aligned} 
& 0, \text{\ \ if } s_{t-1} = s_{t-2} = \cdots = s_{t-L^{s_{t-1}}}\\
& L^{s_{t-1}} + 1 - \min_{2\leq i\leq L^{s_{t-1}}}\{s_{t-i}\neq s_{t-1}\}, \text{o.w.}\\
\end{aligned} 
\end{numcases} 
}

\subsection{Problem Definition} 

Divide total time period $T$ into integer slots $\mathcal{T}\triangleq\{1, \cdots, T\}$, where each slot is one month, corresponding to the minimum switching period. We aim to find a solution $\bm{s} \triangleq (s_1, s_2, \cdots, s_T)$ to the following energy plan selection problem: 

\begin{align} 
     \textbf{(EPSP)\ \ }\min \  &  \ {\rm Cost}(\bm{s})\triangleq
      \sum_{t=1}^{T}\Big(g_t(s_t) +  \beta(s_{[t-L^{s_{t-1}}:t]}) \Big)\\
      \text{s.t. \ } & \beta(s_{[t-L^{s_{t-1}}:t]}) = \notag\\
      & \quad (\gamma^{s_{t-1}}+\alpha^{s_{t-1}}\cdot l_t)\cdot\bm{1}_{\{s_t\neq s_{t-1}\}}\cdot\bm{1}_{\{l_t\neq 0\}}\\
	\text{variables \ }  & s_t\in \mathcal{S}
	\triangleq
	\{0, 1, \cdots, n\}, t\in [1,T] \notag 
\end{align}
where initial state $s_0$ is 0 without loss of generality. Further we define $s_{-1} = s_{-2} = \cdots = s_{-L^0} = 0$. Total cost function ${\rm Cost}(\bm{s})$ consists of operational cost $g_t(s_t)$ defined in (\ref{equ:gt}), and remaining period $l_t$ defined in (\ref{equ:l_t}).

Note that this is the general problem definition, which includes all possible plans. We believe it is the first time in the literature to have a well-defined problem formulation for energy plan selection problems. However, this problem is hard to solve as switching cost is related to multiple states and is asymmetric. In the following, we focus on a simplified 2-state version with justification.

First, as variable-rate plans have zero cancellation fee, selecting the least-cost one at each time $t$ is always the optimal among all variable-rate plans. As a result, we could consider merely one variable-rate plan, which tracing the lowest variable-rate in the market. Second, after investigating most of the fixed-rate plans in the market, we conclude that their rates can be approximately characterized by a stable ranking in the long run. Therefore we can focus on the one with the cheapest average rate. As a result, we are able to reduce the large number of plans to two plans by properly preprocessing all plans. Denote the state of the fixed-rate plan by 0 and variable-rate plan by 1. We simplify \textbf{EPSP} as follows:

\begin{subequations} 
	\begin{align} 
	\textbf{(EPSP-2)\ \ } \min \  &  \ {\rm Cost}(\bm{s})\triangleq \sum_{t=1}^{T}\Big(g_t(s_t)+\hat{\beta_t}\cdot (s_t-s_{t-1})^+\Big) \label{equ:ecep-min}\\ 
	\text{subject to \ } &  \hat{\beta_t} = \bm{1}_{\{\sum_{t-\textsf{L}+1}^t s_\tau > 0\}} \cdot \beta_t, \label{equ:ecep-L}\\ 
	\text{variables \ } & s_t\in \mathcal{N}\triangleq\{0, 1\}, t\in [1,T], \notag 
	\end{align} 
\end{subequations} 
where $(x)^+ = \max\{x,0\}$. Constraint  (\ref{equ:ecep-L}) captures zero cancellation fee when canceling a fixed-rate plan by the end of contract length.

\textbf{Remarks:} Although our problem formulation is similar to the online optimization problems in LCP \cite{lcp}, CHASE \cite{chase} and their extended version \cite{oco, disc, rchase}, the fundamental difference in constraint (\ref{equ:ecep-L}) makes our problem more challenging. Note that the cancellation fee depends on the last $L^{s_t}$ states at time $t$. Hence, we cannot follow the  sub-problem division as that of prior studies. Therefore, our problem is harder and requires non-trivial treatments.
\section{Constant Switching Cost}
\label{sec:cons}

In this section, we focus on the essential part of \textbf{EPSP} by eliminating some constraints and  studying a simplified version. A survey  \cite{powertochooseNYS} shows that most retailers who offer fixed-rate plans do not set a minimum contract period. Further, the longer the maximum contract period is, the lower price the contract offers. Therefore, it is reasonable to assume  $L^{s}$ being very large compared to $T$, and drop constraints in (\ref{equ:ecep-L}). Besides, a large proportion of fixed-rate plans guarantee a constant cancellation fee. Under such observation, we reformulate \textbf{EPSP} to a basic problem as follows:
\begin{align}
\textbf{(SP)\qquad }  \min \ \ & {\rm Cost}(\bm{s})\triangleq \sum_{t=1}^{T}\Big(g_t(s_t)+\beta\cdot (s_t-s_{t-1})^+\Big) \label{equ:sp}\\
\text{variables \ } & s_t\in \mathcal{N}. \notag
\end{align}

The following proposition shows that  \textbf{SP} belongs to the classical online decision problem {\em Metrical Task System} (MTS)  \cite{BEY05online}, where \textbf{P2} is a typical MTS problem. 

\begin{prop}
	\label{thm:sp}
	The following two problems are equivalent under the boundary condition of $s_0=s_{T+1}=0$, $g_{T+1}(0)=0$:
	\begin{align}
	\textup{\textbf{(P1)\qquad }}  \min \ \  & 
	\sum ^{T+1}_{t=1}\Big(g_{t}(s_{t}) 
	+\beta (s_{t}-s_{t-1})^+\Big)
	\label{prob1}
	\\
	\textup{subject to} & \quad s_{t} \in \mathcal{S} \notag
	\end{align}
	\begin{align}
	\textup{\textbf{(P2)\qquad }}  \min \ \  & 
	\sum ^{T+1}_{t=1}\Big(g_{t}(s_{t}) 
	+\frac{\beta}{2}\left|s_{t}-s_{t-1}\right|\Big)
	\label{prob2}
	\\
	\textup{subject to} & \quad s_{t} \in \mathcal{S} \notag
	\end{align}
\end{prop}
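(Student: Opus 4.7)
The plan is to show that the two objective functions in P1 and P2 are pointwise equal on the set of feasible sequences that satisfy the boundary condition $s_0 = s_{T+1} = 0$. Since the operational-cost term $\sum_{t=1}^{T+1} g_t(s_t)$ is identical in both formulations, the only nontrivial step is to argue that the switching-cost term $\beta \sum_{t=1}^{T+1}(s_t - s_{t-1})^+$ coincides with $\tfrac{\beta}{2}\sum_{t=1}^{T+1}|s_t - s_{t-1}|$ under the boundary condition. Once this pointwise equality is established, the two problems share the same feasible region and the same objective value on it, and therefore trivially have the same optimal value and the same set of optimizers.

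The key observation is a simple conservation identity: for any trajectory that starts and ends at the same state, the total magnitude of upward transitions must equal the total magnitude of downward transitions. To make this precise, I would first apply the telescoping sum $\sum_{t=1}^{T+1}(s_t - s_{t-1}) = s_{T+1} - s_0 = 0$, together with the decomposition $s_t - s_{t-1} = (s_t - s_{t-1})^+ - (s_{t-1} - s_t)^+$, to deduce $\sum_{t=1}^{T+1}(s_t - s_{t-1})^+ = \sum_{t=1}^{T+1}(s_{t-1} - s_t)^+$. Coupling this with the pointwise identity $|s_t - s_{t-1}| = (s_t - s_{t-1})^+ + (s_{t-1} - s_t)^+$ and summing yields $\sum_{t=1}^{T+1}|s_t - s_{t-1}| = 2\sum_{t=1}^{T+1}(s_t - s_{t-1})^+$, which is exactly the required equality after multiplying by $\beta/2$.

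The role of the auxiliary stipulation $g_{T+1}(0) = 0$ is merely to make the appended closing slot cost-free in the operational term, so that the extension of the horizon from $T$ to $T+1$ does not distort the original cost; the boundary condition $s_{T+1} = 0$ is then what enables the conservation identity above to kick in. I do not expect any substantive obstacle here: the argument reduces to counting upward and downward transitions, and the extension to a general state space $\mathcal{S} = \{0,1,\dots,n\}$ (rather than just binary) is handled uniformly by the telescoping decomposition, since only the net displacement between $s_0$ and $s_{T+1}$ matters. The only mild care is to treat the switching-cost terms at $t=1$ and $t=T+1$ correctly, which the boundary condition handles by design.
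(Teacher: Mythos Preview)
Your proposal is correct and follows essentially the same approach as the paper: reduce to showing the switching-cost sums agree, use the telescoping identity $\sum_{t=1}^{T+1}(s_t-s_{t-1})=s_{T+1}-s_0=0$ together with the decomposition of each increment into its positive and negative parts, and conclude that $\sum|s_t-s_{t-1}|=2\sum(s_t-s_{t-1})^+$. Your presentation via the identities $x=x^+-(-x)^+$ and $|x|=x^++(-x)^+$ is in fact a bit cleaner than the paper's case analysis, but the substance is identical.
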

See Appendix \ref{pro:sp} for the full proof.

\subsection{Offline Optimal Algorithm}
\label{sec-ofa}

Denote the input of problem \textbf{SP} as $\bm{\sigma} = (\sigma_t)_{t=1}^T$. In the offline setting, \textbf{SP} can be optimally solved since $\bm{\sigma}$ is given before time starts. An efficient offline optimal algorithm (OFA$_s$) with low space and computational complexity is obtained in this section. By discussing the behavior of OFA$_s$, we motivate our design towards competitive online algorithms for the next section.
Our idea comes from the theoretical framework in \cite{chase}, and we follow notations therein.

\begin{defn}
	\label{def:delta}
	Define the one-timeslot cost difference by
	\begin{equation}
	\delta(t)\triangleq g_t(0)-g_t(1). \label{equ-delta}
	\end{equation}
\end{defn}
If $\delta(t)$ is positive, the cost of using the variable-rate plan ($s_t$ is 1) is smaller than that of using the fixed-rate plan. Therefore, it urges us to move $s_t$ to 1 for a lower cost. Similarly, a negative  $\delta(t)$ prompts us to let $s_t$ be 0.

However, we shall not decide $s_t$ only based on $\delta(t)$. The potential frequent change in $s_t$ will bring in a high cancellation fee. As an alternative, we aim to find a balance between staying within one plan and changing to a lower cost plan by calculating the cumulative cost difference between $g_t(0)$ and $g_t(1)$ throughout a time period.

\begin{defn}
	Define the cumulative cost difference by
	\begin{equation}
	\Delta(t)\triangleq \Big(\Delta(t-1)+\delta(t)\Big)_{-\beta}^0 \ , \label{equ-Delta}
	\end{equation}
	where $(x)_a^b\triangleq\min\{b,\max\{x,a\}\}$, with the initial value  $\Delta(0) = -\beta$.
\end{defn}

When $\Delta(t)$ increases from $-\beta$ to 0 in a time interval, the cumulative cost of staying at state 0 is no less than that of changing state from 0 to 1. Therefore, $\text{OFA}_s$ should let the state be 1 throughout this time interval. Analogously, $\text{OFA}_s$ should keep the state as 0 when $\Delta(t)$ decreases from 0 to $-\beta$. In the rest time intervals, the change in $\Delta(t)$ is less than $\beta$. The state shall not alter in each interval as it will only bring in more cost. We name this algorithm as $\text{OFA}_s$ given in Algorithm 1, and an example is shown in Fig. \ref{fig:rCHASE_prob}.

\begin{thm}
	\label{thm:ofa}
	 $\text{OFA}_s$ (Algorithm 1) is an optimal offline algorithm for problem \textbf{SP}.
\end{thm}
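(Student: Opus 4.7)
The approach I would take is forward dynamic programming combined with a backward read-off, motivated by Proposition 1, which lets me work with the symmetric-switching formulation P2 (cost $\tfrac{\beta}{2}|s_t - s_{t-1}|$) while still claiming optimality for the original asymmetric SP. Define $F_t(s)$ to be the minimum cost over time slots $1, \dots, t$ of any trajectory ending in state $s_t = s$ with $s_0 = 0$. The standard two-state DP recurrence is
\begin{equation*}
F_t(s) = g_t(s) + \min\{F_{t-1}(s),\ F_{t-1}(1-s) + \tfrac{\beta}{2}\}.
\end{equation*}

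First, I would establish the identity $\Delta(t) = F_t(0) - F_t(1) - \tfrac{\beta}{2}$ by induction on $t$. This requires a three-way case split on whether the quantity $F_{t-1}(0) - F_{t-1}(1)$ lies above $\tfrac{\beta}{2}$, below $-\tfrac{\beta}{2}$, or in between; in each regime one of the two arguments of the min is active, and a short calculation yields exactly the clipped recursion $\Delta(t) = \bigl(\Delta(t-1) + \delta(t)\bigr)_{-\beta}^0$ after the symmetric shift. The initial condition $\Delta(0) = -\beta$ corresponds to $s_0 = 0$ being fixed, so state $1$ is not reachable without an initial switch.

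Next, I would reconstruct an optimal trajectory by reading $\Delta(t)$ backward from $t = T$ down to $1$ with $s_{T+1} = 0$. The key observation is that when $\Delta(t) = -\beta$ we have $F_t(1) \geq F_t(0) + \tfrac{\beta}{2}$, so committing to $s_t = 0$ is weakly optimal regardless of what $s_{t+1}$ turns out to be; symmetrically, $\Delta(t) = 0$ forces $s_t = 1$. In the interior case $-\beta < \Delta(t) < 0$, neither state has a cost advantage large enough to justify an extra switch, so the optimum simply copies $s_{t+1}$ and avoids paying $\tfrac{\beta}{2}$ gratuitously. Assembling these decisions into an inductive argument over $t$ yields a trajectory whose total cost equals $\min\{F_{T+1}(0), F_{T+1}(1)\}$, which is the offline optimum.

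The main obstacle is the interior case. Writing a local exchange argument is delicate because the ``correct'' $s_t$ genuinely depends on the future, which is precisely why OFA$_s$ propagates $s_t = s_{t+1}$ rather than picking a fixed value. I would handle this by identifying the maximal intervals on which $\Delta(t) \in (-\beta, 0)$: on such an interval no boundary is hit, the DP preserves whichever sign of $F_t(0) - F_t(1)$ was set by the previous boundary event, and the constant-state choice forced by the next boundary can be propagated backward through the interval at no additional cost. A secondary technicality is keeping the bookkeeping between P1 and P2 consistent---each one-way $0 \to 1$ switch costs $\beta$ in P1 but is split as two $\tfrac{\beta}{2}$ half-switches in P2---but Proposition 1 together with the boundary conditions $s_0 = s_{T+1} = 0$ ensures the two totals coincide, so the P2-optimum transfers back to SP without loss.
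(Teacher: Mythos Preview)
Your approach is sound but follows a genuinely different route from the paper. The paper proves optimality by a direct exchange argument: it partitions $[0,T+1]$ into the constant segments of the OFA$_s$ solution $\bm z$, takes an arbitrary competitor $\bm z^*$, and for every maximal interval on which $z^*_t\neq z_t$ uses telescoping bounds on $\Delta(\cdot)$ to show ${\rm Cost}(\bm z^*_{[\tau_l,\tau_r]})\ge {\rm Cost}(\bm z_{[\tau_l,\tau_r]})$ (eight sub-cases, depending on which side deviates and whether the interval touches a segment endpoint). Your plan instead identifies $\Delta$ with a shift of the two-state DP value difference and then reads OFA$_s$ off as the standard backward DP traceback. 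Your route is arguably more conceptual---it explains \emph{why} $\Delta$ is the right invariant---whereas the paper's route avoids setting up the DP machinery and produces segmentwise cost inequalities that are reused almost verbatim in the later competitive-ratio proof of Theorem~\ref{thm:det}.

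One correction you will need: the identity $\Delta(t)=F_t(0)-F_t(1)-\tfrac{\beta}{2}$ is not literally true. Your three-case induction actually yields
\[
F_t(0)-F_t(1)=\delta(t)+\bigl(F_{t-1}(0)-F_{t-1}(1)\bigr)_{-\beta/2}^{\beta/2},
\]
a \emph{pre}-clipped recursion, so $F_t(0)-F_t(1)-\tfrac{\beta}{2}$ can overshoot $[-\beta,0]$ at the step where a boundary is first hit. The correct statement is $\Delta(t)=\bigl(F_t(0)-F_t(1)-\tfrac{\beta}{2}\bigr)_{-\beta}^{0}$, which still gives exactly what your backward pass needs: $\Delta(t)=0\Leftrightarrow F_t(0)-F_t(1)\ge\tfrac{\beta}{2}$ (so $s_t=1$ weakly dominates regardless of $s_{t+1}$), $\Delta(t)=-\beta\Leftrightarrow F_t(0)-F_t(1)\le-\tfrac{\beta}{2}$ (so $s_t=0$ dominates), and in the interior the DP traceback copies $s_{t+1}$, matching OFA$_s$. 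With that fix your argument goes through. A related bookkeeping point: with $s_0=0$ fixed, the honest initialization is $F_0(1)=+\infty$, giving $F_0(0)-F_0(1)-\tfrac{\beta}{2}=-\infty$; under the clipped identity this correctly collapses to $\Delta(0)=-\beta$.
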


\begin{IEEEproof} (Sketch)
	To show $\text{OFA}_s$ has the minimum cost for any given input, we divide time $T$ into segments based on solutions of $\text{OFA}_s$. Then we show that in each segment, no other solution can obtain a lower cost.
	
	See Appendix \ref{pro:ofa} for the full proof.
\end{IEEEproof}

\begin{thm}
\label{thm:tscom}
	Both the running time and space requirement of $\text{OFA}_s$ for problem \textbf{SP} are $\mathcal{O}(T)$.
\end{thm}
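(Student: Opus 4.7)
The plan is to decompose the work done by $\text{OFA}_s$ into two conceptual passes and bound each separately. First I would argue that for every time slot $t$, the input triple $\sigma_t = (e_t, p_t^0, p_t^1, B_t)$ allows one to evaluate $g_t(0)$ and $g_t(1)$ via \eqref{equ:gt} using only a constant number of arithmetic operations (one max, one comparison, and a handful of multiplications and additions). Consequently, $\delta(t) = g_t(0) - g_t(1)$ in Definition~\ref{def:delta} is computable in $O(1)$ time and occupies $O(1)$ space per slot.

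Next I would address the cumulative cost difference. Since $\Delta(t)$ is defined by the \emph{forward} recurrence $\Delta(t) = \bigl(\Delta(t-1)+\delta(t)\bigr)_{-\beta}^{0}$ with initial value $\Delta(0)=-\beta$, the entire sequence $\{\Delta(t)\}_{t=1}^{T}$ can be produced by a single sweep from $t=1$ to $t=T$: at each slot we read $\delta(t)$, add it to the stored $\Delta(t-1)$, and clamp the result to $[-\beta,0]$ using one comparison on each side. Each update is $O(1)$, so the forward sweep takes $O(T)$ time, and storing the resulting values in an array of length $T$ takes $O(T)$ space. The key (and really the only) subtlety to flag is that although Algorithm~\ref{alg-ofas} is written as a backward loop, the quantity $\Delta(t)$ it references is forward-dependent; this apparent tension is resolved by precomputing and caching the whole $\Delta$ array before the backward pass, which is what makes the linear bound possible without blowing up the complexity.

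Finally I would analyze the backward assignment loop itself. Starting from $s_{T+1}=0$, for each $t$ from $T$ down to $1$ the algorithm performs a constant-time lookup of $\Delta(t)$ followed by a three-way branch that sets $s_t$ to $0$, $1$, or $s_{t+1}$. Each iteration is $O(1)$ work, so the backward pass contributes $O(T)$ time and the output vector $\bm{s}$ adds $O(T)$ space. Summing the forward sweep, the backward sweep, and the storage for $\delta(\cdot)$, $\Delta(\cdot)$, and $\bm{s}$ gives the claimed $O(T)$ time and $O(T)$ space bounds. I do not anticipate any real obstacle here; the only thing worth emphasizing in the write-up is that $O(T)$ is asymptotically optimal as well, since simply reading the input $\sigma_1,\ldots,\sigma_T$ and writing the output $s_1,\ldots,s_T$ already requires $\Omega(T)$ operations and $\Omega(T)$ storage, so $\text{OFA}_s$ is tight up to constants.
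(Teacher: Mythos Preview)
Your proposal is correct and follows essentially the same approach as the paper: show that each $\Delta(t)$ is computable in $O(1)$ from $\Delta(t-1)$ so the whole array costs $O(T)$ time and space, then observe that the backward loop of Algorithm~\ref{alg-ofas} adds another $O(T)$. Your write-up is simply more detailed, explicitly separating the forward precomputation of $\Delta(\cdot)$ from the backward assignment pass and noting the $\Omega(T)$ lower bound, which the paper also mentions immediately after the proof.
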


\begin{IEEEproof}
	First, $\Delta(t)$ is computed in $\mathcal{O}(1)$ time at each time slot $t$. Thus, $\Delta(t)$ requires at most $\mathcal{O}(T)$ for computation and storage in $T$.
	Running this algorithm takes additionally $\mathcal{O}(T)$ time and space resources based on the structure of $\text{OFA}_s$.
	In total, only $\mathcal{O}(T)$ is needed for both time and space.
\end{IEEEproof}

Further, it is evident that no algorithm can beat $\mathcal{O}(T)$ in either time or space complexity. Therefore, $\text{OFA}_s$ is an optimal offline algorithm for problem \textbf{SP} due to the length of input and solution sequence.
\subsection{Competitive Online Algorithms}
\label{sec-chase}

We present two types of online algorithms in this section. First, there is a deterministic online algorithm that produces invariable solutions for the same input sequence. Second, we show a randomized online algorithm which generates a probabilistic ensemble of solutions over time.

\subsubsection{Deterministic Online Algorithm}\hspace*{\fill} 

To begin with, we formally define the terms of the online algorithm and the competitive ratio \cite{BEY05online}.
Let ${\rm Opt}(\bm{\sigma})$ be the optimal offline cost for input $\bm{\sigma}$. An {\em online algorithm} ${\mathcal A}$ generates output $s_t$ at time $t$ based on  $(\sigma_\tau)_{\tau=1}^t$ only. If the solution of $\mathcal{A}$ is fixed for the same $\bm{\sigma}$, we call $\mathcal{A}$ a deterministic  online algorithm. Algorithm $\mathcal{A}$ is  \textit{$c$-competitive} if 
\begin{equation}
\label{equ:cr}
	{\rm Cost}_{\mathcal{A}}(s_0,\bm{\sigma})\leq c\cdot {\rm Opt}(s_0,\bm{\sigma})+\gamma(s_0),\forall \mbox{\boldmath$\sigma$},
\end{equation}
where $s_0$ is the initial state,  $\gamma(s_0)$ is a constant only related to $s_0$, and ${\rm Cost}_{\mathcal{A}}(\bm{\sigma})$ is the total cost of ${\mathcal A}$ for input $\bm{\sigma}$.
The smallest $c$ satisfying (\ref{equ:cr}) is the \textit{competitive ratio} of ${\mathcal A}$. This section aims to find a deterministic ${\mathcal A}$ with a minimal $c$ for problem \textbf{SP}.

Following the idea of $\text{OFA}_s$, we investigate how the state should change based on $\Delta(t)$. As $\Delta(t)$ is not revealed until the arrival of $t$, we do not change the state unless it is inevitable that staying in the current state has already incurred more cost than changing to another state. Namely, $s_t$ is set to 1 by the time $\Delta(t)$ increases from $-\beta$ to 0, and set to 0 when $\Delta(t)$ views a drop from 0 to $-\beta$. In other cases, $s_t$ does not change. The algorithm is formally given in Algorithm 2 and shown in Fig. \ref{fig:rCHASE_prob} by an example. The algorithm is named $\text{gCHASE}_s$ since it appears to  ``chase'' $\text{OFA}_s$ in a literal sense, where the behavior of $\text{OFA}_s$ is replicated in an online fashion. The only behavior difference between two algorithms is during time period $T^{u}$ and $T^{d}$, where $\Delta(t)$ increases from $-\beta$ to 0 and decreases from 0 to $-\beta$, respectively. 

\begin{thm}
	\label{thm:det}
	The competitive ratio of $\text{gCHASE}_s$ (Algorithm 2) for problem \textbf{SP} is 3.
\end{thm}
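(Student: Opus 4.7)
The plan is an amortized analysis with a potential function $\Phi(t)$ that depends on the algorithm's internal variable $\Delta(t)$, its state $a_t$, and the offline optimum's state $o_t$. By Proposition~\ref{thm:sp}, I may equivalently work with the symmetric formulation (P2), in which every transition (in either direction) costs $\beta/2$; this puts ALG's and OPT's switching costs on a common footing. A natural choice of potential is
\[
\Phi(t) = 2\bigl[(1-a_t)(\beta+\Delta(t)) - a_t\,\Delta(t)\bigr] + \beta\,\bigl|a_t - o_t\bigr|,
\]
which takes values in $[0,3\beta]$. The first bracket is designed so that it equals $0$ immediately after an ALG switch (when $\Delta(t)$ sits at the extreme consistent with the new $a_t$) and grows to $2\beta$ just as the next switch is about to fire; consequently $\Phi$ drops by exactly $2\beta$ at every ALG switching step. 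The second summand tracks the mismatch with OPT.

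The heart of the proof is verifying the per-slot amortized inequality
\[
g_t(a_t) + \tfrac{\beta}{2}\bm{1}_{\{a_t\neq a_{t-1}\}} + \Phi(t) - \Phi(t-1) \;\leq\; 3\Bigl(g_t(o_t) + \tfrac{\beta}{2}\bm{1}_{\{o_t\neq o_{t-1}\}}\Bigr).
\]
I would split into the four $(a_t, o_t)\in\{0,1\}^2$ cases, and within each further into the sub-cases where $\Delta(t)$ is or is not clipped by the bounds $-\beta$ and $0$. On non-switch slots, $\Phi(t)-\Phi(t-1)$ is bounded by a linear function of $\delta(t) = g_t(0)-g_t(1)$ (with clipping only making the bound tighter), and combining with $g_t(a_t)$ and using $g_t(\cdot)\ge 0$ yields the required $3\,g_t(o_t)$ upper bound. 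At an ALG switching step the $2\beta$ drop in the first bracket absorbs the $\beta/2$ switching cost plus any $\beta$ increase in the mismatch term; at an OPT switching step the at most $\beta$ increase of the mismatch term is covered by the $3\cdot\beta/2$ on the right.

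Summing the amortized inequality over $t=1,\ldots,T+1$ and using that $\Phi$ stays in $[0,3\beta]$ yields $\textrm{Cost}_{\text{gCHASE}_s} \leq 3\,\textrm{Opt} + O(\beta)$, which is the claimed $3$-competitiveness with $\gamma(s_0) = O(\beta)$. The main obstacle is verifying the per-slot inequality across all four $(a_t, o_t)$ combinations together with the clipping sub-cases; although each individual case is a short computation, checking them all systematically is tedious and must be done carefully. A secondary subtlety is the simultaneous-switch step (both ALG and OPT switch at the same $t$), where both sides of the amortized inequality change sharply and one must confirm that the $3\beta/2$ of slack left after absorbing ALG's switching cost is sufficient to cover the additional OPT contribution.
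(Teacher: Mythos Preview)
Your plan differs from the paper's, which does a direct segment-by-segment comparison of $\text{gCHASE}_s$ against $\text{OFA}_s$ (splitting $[1,T]$ into ``up'' segments where $\Delta$ climbs from $-\beta$ to $0$ and ``down'' segments where it falls back, and bounding the cost gap on each by $\beta$). A potential-function route is a legitimate alternative, but the specific potential you propose does \emph{not} satisfy the per-slot amortized inequality you state.

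\medskip
\noindent\textbf{Counterexample.} Take $\Delta(0)=-\beta$, $a_0=o_0=0$, and at $t=1$ set $g_1(0)=\tfrac{\beta}{2}-\varepsilon$, $g_1(1)=0$; suppose OPT switches so that $o_1=1$. Then $\delta(1)=\tfrac{\beta}{2}-\varepsilon$, $\Delta(1)=-\tfrac{\beta}{2}-\varepsilon\in(-\beta,0)$, so $a_1=0$. With your $\Phi$,
\[
\Phi(0)=0,\qquad \Phi(1)=2\bigl(\beta+\Delta(1)\bigr)+\beta = 2\beta-2\varepsilon,
\]
so the left side of your amortized inequality is
\[
g_1(0)+\Phi(1)-\Phi(0)=\Bigl(\tfrac{\beta}{2}-\varepsilon\Bigr)+(2\beta-2\varepsilon)=\tfrac{5\beta}{2}-3\varepsilon,
\]
while the right side is $3\bigl(g_1(1)+\tfrac{\beta}{2}\bigr)=\tfrac{3\beta}{2}$. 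The inequality fails by roughly $\beta$.

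\medskip
\noindent\textbf{Why it fails.} When $a_t=0$ and $\Delta$ is rising, the first bracket of $\Phi$ \emph{increases} by $2\delta(t)$, so your amortized cost on such a step is $g_t(0)+2\delta(t)=3g_t(0)-2g_t(1)$. If $o_t=1$, you need this to be at most $3g_t(1)$, i.e.\ $g_t(0)\le\tfrac{5}{3}g_t(1)$, which is simply not true in general. The $3\cdot\tfrac{\beta}{2}$ from OPT's switch covers the $\beta$ jump in the mismatch term, but gives nothing extra to offset the first bracket's growth. Interleaving moves does not rescue this: the OPT-move sub-step already exhausts the $\tfrac{3\beta}{2}$ on the right, and the subsequent ALG-processing sub-step still has $3g_t(0)-2g_t(1)\le 3g_t(1)$ to verify.

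\medskip
A potential-function proof can be made to work for this two-state problem, but the potential must be chosen so that it \emph{decreases} as $\Delta$ moves toward the threshold that would make $a_t$ agree with $o_t$ (e.g.\ tie the first bracket to $o_t$ rather than $a_t$, or use a work-function--based potential); even then the ALG-switch step requires separate care. As written, your case analysis would not close. The paper's segment argument sidesteps all of this by bounding the excess cost of $\text{gCHASE}_s$ over $\text{OFA}_s$ on each critical segment directly.
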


\begin{IEEEproof} (Sketch)
	Based on the value of $\Delta(t)$, we divide time $T$ into intervals and sort them by critical segments as in \cite{chase}. After analyzing the costs of $\text{gCHASE}_s$ and $\text{OFA}_s$ in each time interval and calculating their overall costs, it can be shown that the cost ratio upper bound of $\text{gCHASE}_s$ over $\text{OFA}_s$ is 3.
	
	See Appendix \ref{pro:det} for the full proof.
\end{IEEEproof}

\textbf{Remark:} Our algorithm $\text{gCHASE}_s$ is a generalized version of algorithm CHASE in \cite{chase}, and it adopts notations therein analogously. In \cite{chase,oco} and \cite{disc}, 3 is shown to be the lower bound on the competitive ratio of any deterministic online algorithm for problem \textbf{SP}. Further, using the same method as that in Theorem \ref{thm:tscom}, $\text{gCHASE}_s$ can be proved to have both time and space complexity of $\mathcal{O}(T)$. Hence, $\text{gCHASE}_s$ is one of the best deterministic online algorithms in terms of competitive ratio and computational complexity. 

\begin{figure}[h]
    \includegraphics[width=0.49\textwidth]{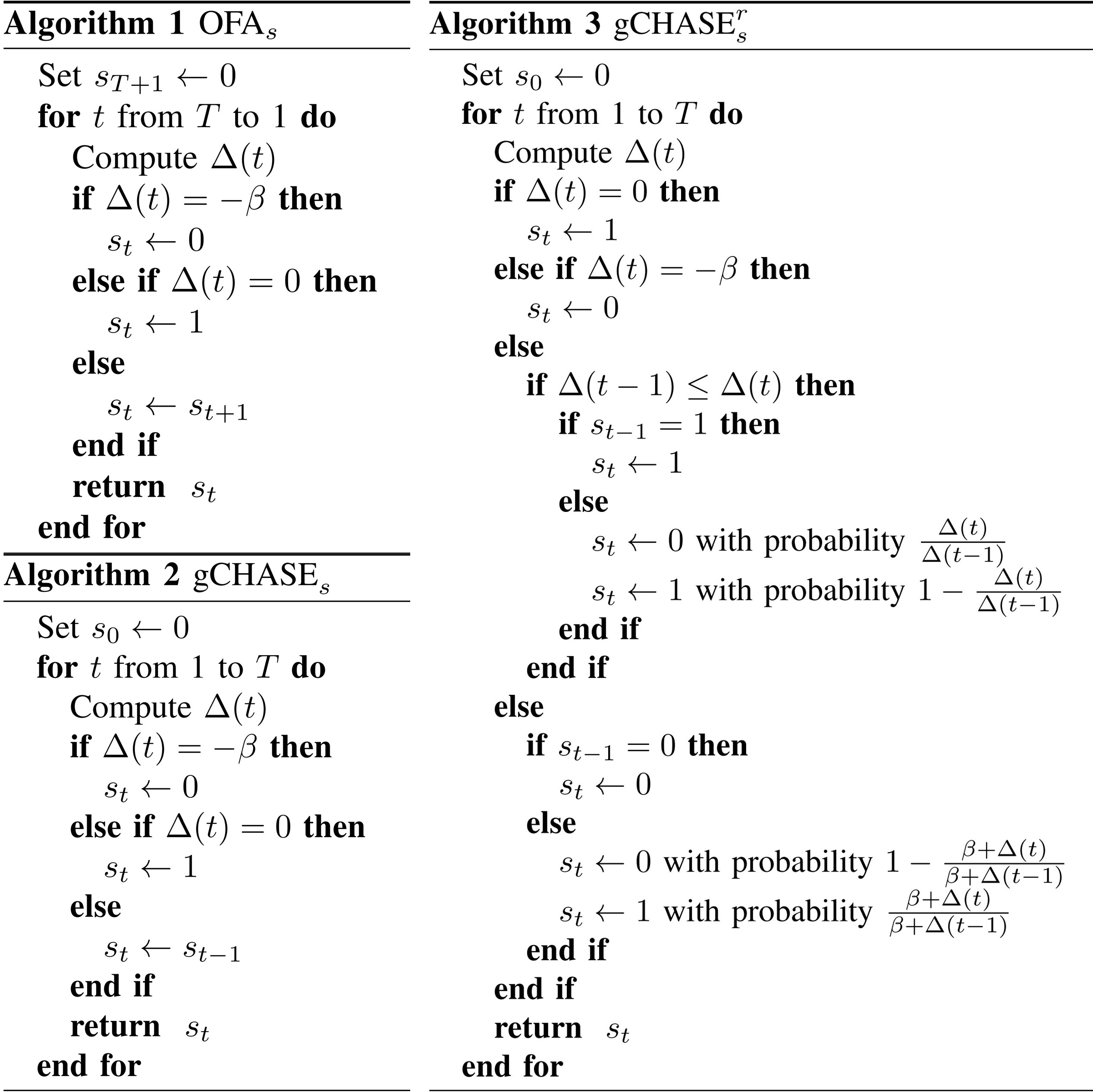}
    \label{fig:alg_all}
\end{figure}

\subsubsection{Randomized Online Algorithm}\hspace*{\fill} 

By adding randomization to an online algorithm ${\mathcal A}$, we let it make decisions probabilistically in each execution. Define the { \em expected} competitive ratio of randomized online algorithm ${\mathcal A}$ by the smallest constant $c$ satisfying
\begin{equation}
	{\mathbb E}[{\rm Cost}_{\mathcal{A}}(s_0,\bm{\sigma})]\leq c\cdot {\rm Opt}(s_0,\bm{\sigma})+\gamma(s_0),\forall \mbox{\boldmath$\sigma$},
\end{equation}
where ${\mathbb E}[ \cdot ]$ is the expectation over all random solutions. Next, we devise a competitive randomized online algorithm for problem \textbf{SP}.


\begin{figure}[h]
    \centering
    \includegraphics[width=0.48\textwidth]{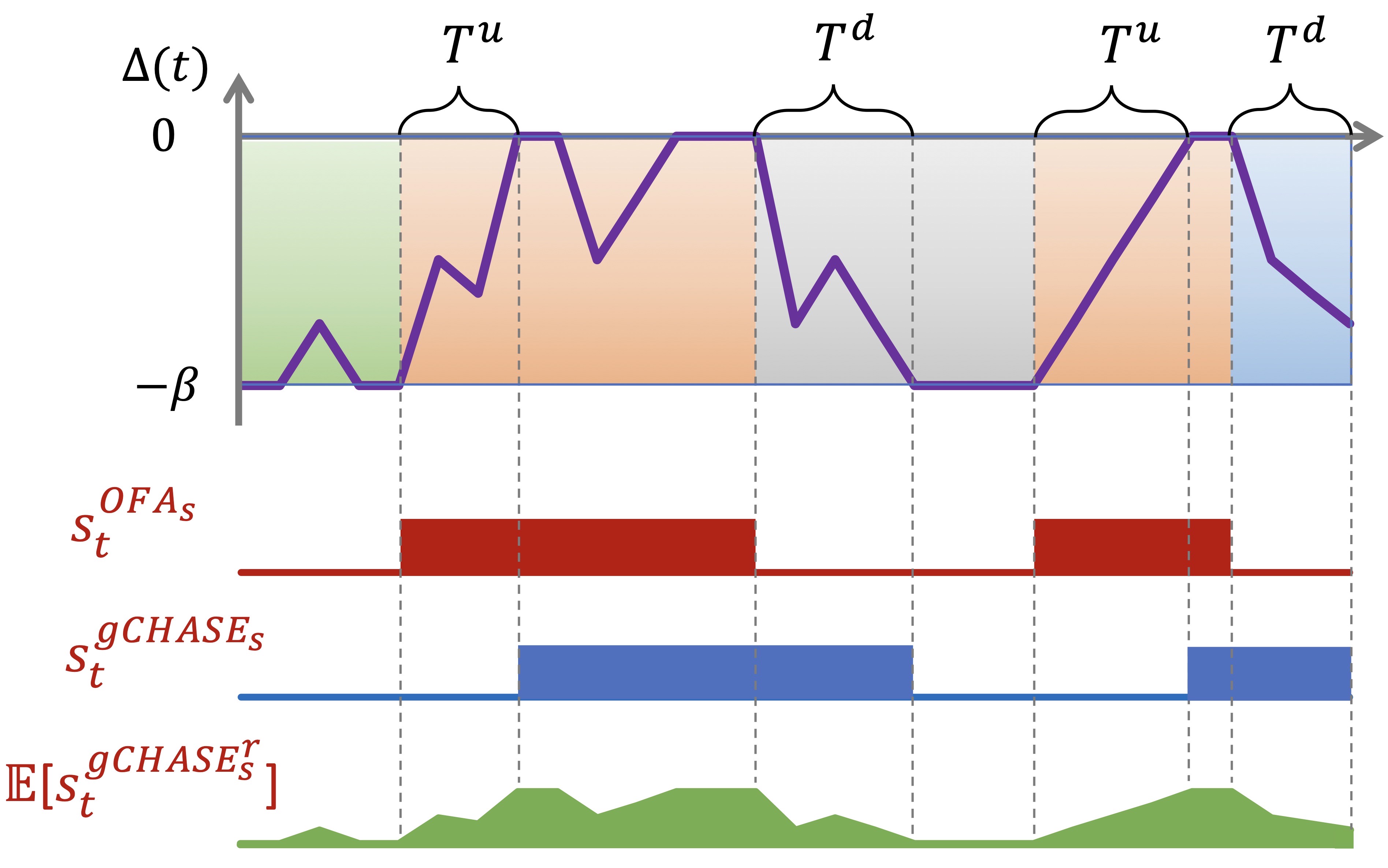}
    \caption{An example of $\Delta(t)$, $\text{OFA}_s$, $\text{gCHASE}_s$, and $\text{gCHASE}_s^r$. The solutions of running Algorithm 1, 2, and 3 are shown in the last three rows, where $s_t^{\mathcal A}$ denotes the state at time $t$ in algorithm ${\mathcal A}$, and Pr$[\cdot]$ is the probability notation.}
    \label{fig:rCHASE_prob}
\end{figure}

Recall that in $\text{gCHASE}_s$, the state does not change unless the cumulative cost difference $\Delta(t)$ already reaches 0 or $-\beta$, which makes $\text{gCHASE}_s$ a conservative online algorithm. Here we introduce randomization in order to change the state earlier. Intuitively, the nearer $\Delta(t)$ to $-\beta$ or 0 is, the higher probability state $s_t$ to 0 or 1 should take. Ideally, we want the trend of state following $\Delta(t)$. Due to the troublesome non-linear component $(s_t-s_{t-1})^+$ in \textbf{SP}, it is difficult to obtain the expected cost ${\mathbb E}[{\rm Cost}(\bm{s})]$. However, based on Lemma \ref{lem-ran}, we show that certain types of randomized online algorithms can tackle the issue by changing the non-linear term to a linear term.

\begin{lem}
\label{lem-ran}
If a randomized online algorithm satisfies 
\begin{equation}
\label{equ-lem}
    \text{Pr}[s_t>s_{t-1}]\cdot \text{Pr}[s_t<s_{t-1}]=0, \forall t\in [1,T],
\end{equation}
then its solution vector $\bm{s}$ for problem \textbf{SP} satisfies ${\mathbb E}[{\rm Cost}(\bm{s})] = {\rm Cost}({\mathbb E}[\bm{s}])$.
\end{lem}

\begin{IEEEproof}
    Denote $\mathbb{E}[s_t]$ by $\overline{s_t}$. We prove the lemma by showing two parts, $\mathbb{E}[g_t(s_t)] = g_t(\overline{s_t})$ and $\mathbb{E}[(s_t - s_{t-1})^+] = (\overline{s_t}-\overline{s_{t-1}})^+$.  Since $s_t$ is an integer, we first define $g_t(\overline{s_t})$ by interpolating between $g_t(0)$ and $g_t(1)$. Hence, function $g_t(\cdot)$ is linear, and the first part can be proved. For the second part, we prove by direct calculation. If $\text{Pr}[s_t>s_{t-1}] = 0$, function $(\cdot)^+$ always takes positive value. We have $\mathbb{E}[s_t - s_{t-1}] = \overline{s_t}-\overline{s_{t-1}}$. If $\text{Pr}[s_t<s_{t-1}] = 0$, function $(\cdot)^+$ always take value 0. We have $\mathbb{E}[(s_t - s_{t-1})^+] = (\overline{s_t}-\overline{s_{t-1}})^+ = 0$.
    See Appendix \ref{pro:lem-ran} for the complete proof.
\end{IEEEproof}

\textbf{Remark:} 1). Randomized algorithms which satisfy equation (\ref{equ-lem}) has the following property: at each time $t$, the state can only be no greater than the previous state, or no smaller than the previous state. We claim this property is reasonable in designing a randomized online algorithm. As the chosen state should be within a specified range, it is rational to set the range between the previous state and a possible optimal state. Otherwise, if the algorithm let probability spread over the whole state space without considering the previous chosen state, it is unlikely to guarantee good performance. 2). Although there are only two states in problem \textbf{SP}, result in Lemma \ref{lem-ran} holds in multi-state space if function $g_t(\cdot)$ is linear or $g_t(\cdot)$ is piece-wise linear and $s_t$ is only chosen from two consecutive integer numbers. The lemma suggests that one can evaluate the performance of the randomized online algorithms using the expected state at each time slot. 3). As ${\mathbb E}[\bm{s}]$ lies in the continuous state space, the lemma also suggests a way to derive discrete state algorithms based on continuous state algorithms, while maintaining the competitive ratio guarantee. It is the first time we know in the literature to provide a general sufficient condition on designing randomized online algorithms in discrete state space, and we suspect that the optimal randomized online algorithms should all satisfy equation (\ref{equ-lem}) for similar problems.

In the algorithm $\text{gCHASE}_s^r$ shown in Algorithm 3, when $\Delta(t)\leq \Delta(t)$, $\text{Pr}[s_t>s_{t-1}] = 0$, and when $\Delta(t)> \Delta(t)$, $\text{Pr}[s_t<s_{t-1}] = 0$. Therefore, the algorithm satisfies equation (\ref{equ-lem}). Based on the basic probability calculation, we can derive the probability of $s_t$ being 1 equals $\frac{\beta + \Delta(t)}{\beta}$.
We demonstrate the expected state behavior by an example given in Fig. \ref{fig:rCHASE_prob}. Comparing to the conservative algorithm $\text{gCHASE}_s$ the randomized online algorithm changes state more adaptively while incurring lower loss than the deterministic online algorithm.

In real-world, randomized algorithms do not necessarily entail random decisions of a single customer. When we consider an ensemble of a large number of customers using an automatic energy plan recommendation system, each customer can be given a deterministic decision rule drawn from a probabilistic ensemble of decision rules. In the end, the expected cost of a customer can be computed by the expected cost of a randomized algorithm. 


\begin{figure}[h]
    \centering
    \includegraphics[width=0.35\textwidth]{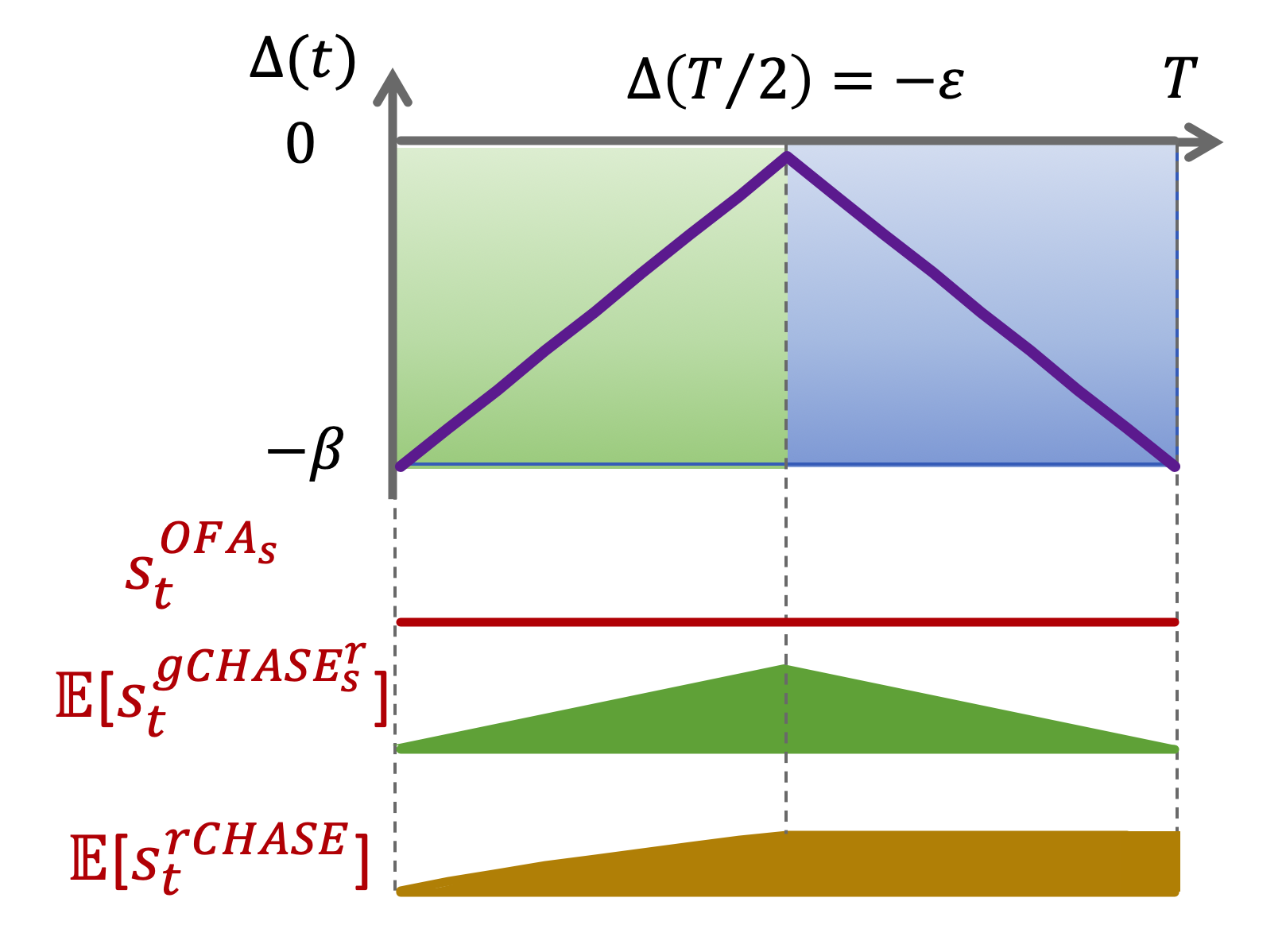}
    \caption{A worst-case $\Delta(t)$ of $\text{gCHASE}^r_s$ and rCHASE\cite{rchase}. The solutions of running Algorithm 1, 3, and rCHASE are shown in the last three rows, where $\mathbb{E}[\cdot]$ is the expectation notation.}
    \label{fig:rCHASE_worst}
\end{figure}

\textbf{Note:} People familiar with the former randomized online algorithm rCHASE in \cite{rchase} may question why $\text{gCHASE}_s^r$ guarantees a smaller competitive ratio. First, rCHASE also satisfies equation (\ref{equ-lem}). An explicit analysis is in Appendix \ref{pro:chase}. Hence it is sufficient to consider the expected state cost. We demonstrate a worst-case example in Fig. \ref{fig:rCHASE_worst}. Such a situation might exist since $\Delta(t)$ is only dependent on input. In the example, the cost of OFA$_s$ is at least $(\beta-\epsilon)$ due to the non-negative $g_t(\cdot)$. Then the cost of $\text{gCHASE}_s^r$ is $2(\beta-\epsilon)$ and the cost of rCHASE is $3(\beta-\epsilon)$. The critical feature of changing the expected value of $s_t$ simultaneously with $\Delta(t)$ lets $\text{gCHASE}_s^r$ having a lower cost.

\begin{thm}
	\label{thm:gchaser}
	The {expected} competitive ratio of $\text{gCHASE}_s^r$ (Algorithm 3) for problem \textbf{SP} is 2.
\end{thm}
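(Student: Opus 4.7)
The plan is to prove Theorem \ref{thm:gchaser} via an amortized analysis, mirroring the segment decomposition used for Theorem \ref{thm:det} but replacing the pointwise state of $\text{gCHASE}_s$ with the probability distribution maintained by $\text{gCHASE}_s^r$ and introducing a potential function that absorbs the expected switching cost. I will assume, following \cite{disc,oco}, that $\text{gCHASE}_s^r$ maintains a probability $p_t \in [0,1]$ of being in state $1$ at time $t$, where $p_t$ is a monotone Lipschitz function of $\Delta(t)$ (concretely, $p_t = 1 + \Delta(t)/\beta$), so that $p_t$ moves smoothly with $\Delta(t)$ between the two boundaries, and commits fully only when $\Delta(t)$ touches $0$ or $-\beta$.

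First, I would rewrite the expected per-slot cost as $\mathbb{E}[g_t(s_t)] = p_t g_t(1) + (1-p_t) g_t(0) = g_t(0) - p_t \delta(t)$, and the expected switching cost as $\beta \cdot (p_t - p_{t-1})^+$. By Proposition \ref{thm:sp} we may work with the symmetric MTS formulation, so this becomes $\tfrac{\beta}{2}|p_t - p_{t-1}|$, which is more convenient for amortization. Next, I would introduce a potential function $\Phi_t$ that measures the expected distance from $\text{OFA}_s$'s state, most naturally of the form $\Phi_t = \beta \cdot \phi(p_t, s_t^{\text{opt}})$, where $\phi$ is chosen so that (i) $\Phi_t \ge 0$, (ii) $\Phi_0 = \Phi_{T+1} = 0$ under the boundary convention $p_0 = s_0^{\text{opt}} = 0$, and (iii) when $p_t$ moves in the opposite direction to $s_t^{\text{opt}}$, the drop in $\Phi_t$ pays for the online switching. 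A candidate is $\phi(p,s) = (1-s)p + s(1-p) \cdot c$ for a carefully chosen constant $c$, tuned to make the amortized inequality tight at $2$.

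The core step is to establish the pointwise amortized inequality
\begin{equation*}
\mathbb{E}[\text{Cost}_{\text{gCHASE}_s^r}(t)] + (\Phi_t - \Phi_{t-1}) \;\le\; 2\cdot \text{Cost}_{\text{OFA}_s}(t).
\end{equation*}
Following the segment decomposition of Theorem \ref{thm:det}, I would partition the horizon into maximal intervals during which $\Delta(t)$ moves monotonically between the boundaries $-\beta$ and $0$, and within each segment further split according to whether $\text{OFA}_s$ is in state $0$ or $1$. Inside a segment with $\Delta(t)$ strictly interior, $\text{OFA}_s$ remains constant while $\text{gCHASE}_s^r$ only shifts probability mass gradually; here I would verify the inequality by comparing the incremental $p_t \delta(t)$ term against the potential drop, which is essentially a one-dimensional calculus exercise once $\phi$ is fixed. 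At segment endpoints, where $\text{OFA}_s$ may flip, I would check that the potential is continuous and that any jump is charged to $\text{OFA}_s$'s switching cost of $\beta$, giving the factor $2$.

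The main obstacle, and the reason the ratio drops from $3$ to $2$, is calibrating $\phi$ (or equivalently the $p_t(\Delta(t))$ profile) so that the amortized inequality holds simultaneously in all segment types. In particular, the profile must convert $\text{OFA}_s$'s charge of $\beta$ for a switch into enough potential to cover both (a) the expected switching cost $\text{gCHASE}_s^r$ will incur while its distribution catches up, and (b) the slack in the operational cost comparison during the catch-up phase. After this inequality is established per slot, summing from $t=1$ to $T+1$ and invoking $\Phi_0 = \Phi_{T+1} = 0$ gives $\mathbb{E}[\text{Cost}_{\text{gCHASE}_s^r}(\bm{\sigma})] \le 2 \cdot \text{Opt}(\bm{\sigma}) + \gamma(s_0)$, completing the proof.
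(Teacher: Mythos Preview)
Your overall strategy matches the paper's: reduce $\text{gCHASE}_s^r$ to its fractional mean trajectory $p_t=(\beta+\Delta(t))/\beta$, verify that the expected operational and switching costs coincide with those of this fractional trajectory, and then run a per-step amortized analysis with a potential $\Phi(p_t,z_t)$ against an optimal trajectory $z_t$. The paper does exactly this (it calls the fractional algorithm $\text{cCHASE}$ and routes through the continuous relaxation \textbf{cSP}, but that detour is inessential since the amortized inequality holds against any $z$).

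The genuine gap is your candidate potential. A potential that is \emph{linear} in $p$, i.e.\ $\phi(p,s)=(1-s)p+cs(1-p)$, cannot certify ratio~$2$ for any choice of $c$. To see why, take $z_t=z_{t-1}=0$ and consider a step where $\Delta$ increases by $f_t(0)$ (so $p_t=p_{t-1}+f_t(0)/\beta$): the amortized difference becomes $f_t(0)\,[a-p_t]$ where $a$ is the slope of $\phi(\cdot,0)$, forcing $a\le 0$; but a step where $\Delta$ decreases by $f_t(1)$ with $z=0$ gives $f_t(1)\,[p_t-a]$, forcing $a\ge 1$. No linear $\phi$ can satisfy both. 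The paper's working potential is quadratic in $p$:
\[
\Phi(p,z)=\beta\Big(\tfrac{1}{2}p^{2}+2z(1-p)\Big),
\]
and the $\tfrac{1}{2}p^2$ term is precisely what makes the amortized slack equal to $-\tfrac{\beta}{2}(p_t-p_{t-1})^2\le 0$ in the interior cases.

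A secondary point: you do not need the segment decomposition from Theorem~\ref{thm:det}. Once you have the right potential, the per-step inequality holds unconditionally (the paper handles it by interleaving moves: first let $z$ move, then let $p$ move), and telescoping with $\Phi_0=0$, $\Phi_{T+1}\ge 0$ finishes immediately. Carrying the segment structure along only complicates the bookkeeping.
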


\begin{IEEEproof} (Sketch)
	By relaxing the discrete states to the continuous states setting, we show that the expected cost of $\text{gCHASE}_s^r$ is equal to a continuous version $\text{cCHASE}_s^r$. Then we show that $\text{cCHASE}_s^r$ has a competitive ratio of 2 in the continuous setting, where the optimal offline algorithm is $\text{cOPT}_{s}$. Lastly, it can be verified that the optimal cost in the discrete setting is an upper bound to the continuous one. The logical is to show
        \begin{align}
    \mathbb{E}[{\rm Cost}_{\text{gCHASE}_s^r}] = {\rm Cost}_{\text{cCHASE}_{s}^{r}} \leq 2{\rm Cost}_{\text{cOPT}_{s}} \leq 2{\rm Cost}_{\text{OFA}_s}.
\end{align}

	See Appendix \ref{pro:ran} for the complete proof.
\end{IEEEproof}

\textbf{Remark:} 1). We improve the randomized algorithm in \cite{rchase} by taking the momentary change of $\Delta(t)$ into consideration. 2). The original ideas of $\text{gCHASE}_s^r$ appeared in \cite{oco} and \cite{disc}. However, their algorithms are based on the continuously updated probability distribution, which is much complicated in operation. 3). In \cite{disc-ext}, Theorem 8 proves that 2 is the lower bound on expected competitive ratio of any randomized online algorithm for problem \textbf{SP}. Therefore $\text{gCHASE}_s^r$ is one of the best randomized algorithms in terms of competitive ratio.
\section{Linearly Decreasing Switching Cost} 
\label{sec:dim}

In this section, we add back constraint (\ref{equ:ecep-L}) in \textbf{EPSP} to study the linearly decreasing cancellation fee setting, in which  $\beta_t$ is proportional to the length of remaining time in a fixed-rate plan. Due to the change of $\beta_t$, we propose a suitable problem setting and discuss its relationship with  \textbf{SP}.

First, the total time period $[0, T+1]$ is divided into consecutive time segments, where states within each segment is the same:
\begin{align}
[T_0, T_1 - 1], \cdots, [T_n, T_{n+1} - 1], \cdots, [T_{2n}, T_{2n+1} - 1],
\label{equ:segment}
\end{align}
where $T_0 = 0$, $T_{2n+1} - 1 = T + 1$, and
\begin{subequations}
	\begin{numcases}{s_t = }
	0, \quad t\in [T_{2i}, T_{2i+1} - 1], \quad \forall i\in [0,n], \label{dsp-st1}\\
	1, \quad t\in [T_{2i-1}, T_{2i} - 1], \quad \forall i\in [1,n]. \label{dsp-st2}
	\end{numcases} 
\end{subequations}

We assume that a household's each fixed-rate plan subscription period $[T_{2i}, T_{2i+1} - 1]$ does not exceed contract length {\sffamily L}. If it is not the case, we divide the period into several segments with 0 time length variable-rate subscription in between. As a result, we rewrite the problem below.
\begin{subequations}
\begin{align}
 \textbf{dSP:} \min \ {\rm Cost}(\bm{s})\triangleq & \sum_{t=1}^{T} g_t(s_t) + \sum_{i = 0}^{n}\Big(\alpha\cdot [\text{\sffamily L} - (T_{2i + 1} - T_{2i})]\Big)\\
\text{subject to \ } & T_{2i + 1} - T_{2i} \in [0, \text{\sffamily L}] , \forall i \in [0,n], \label{equ:length}\\
\text{variables \ } & s_t \text{\ satisfies\  (\ref{dsp-st1}) and  (\ref{dsp-st2})},\\
& 2n \in [0, T],
\end{align}
\end{subequations}
where $\alpha\cdot [\text{\sffamily L} - (T_{2i + 1} - T_{2i})]$ is the cancellation fee when switching from the fixed-rate plan to the variable-rate plan. Constraint (\ref{equ:length}) captures the maximum contract length {\sffamily L} of the fixed-rate plan.

We use Definition \ref{def:delta} for $\delta(t)$ as in problem \textbf{SP},  but define a new cumulative cost difference $\hat{\Delta}(t)$ as follows.
 
\begin{defn}
	\label{def:dDelta}
	Define the cumulative cost difference for \textbf{dSP} by
	\begin{equation}
	\hat{\Delta}(t)\triangleq \Big(\hat{\Delta}(t-1) + \delta(t) - \alpha \Big)_{-\beta}^0 \ , \label{equ-dDelta}
	\end{equation}
	where $\beta = \alpha\cdot\text{\sffamily L}$ and $\hat{\Delta}(0) = -\beta$.
\end{defn}

Intuitively, the total cancellation fee $\alpha \cdot \text{\sffamily L}$ can be divided into {\sffamily L} time slots. The fixed-rate plan tends to suffer $\alpha$ more than the variable-rate plan in each time slot because of the potential cancellation fee.
As the structure of \textbf{dSP} and \textbf{SP} are much alike, we use $\text{gCHASE}_s$ as a online heuristic algorithm, where ${\Delta(t)}$ is replaced by $\hat{\Delta}(t)$, to solve \textbf{dSP}. In Section~\ref{sec:empi}, we show it has a decent performance by empirical evaluations. We further conjecture the online heuristic algorithm $\text{gCHASE}_s$ has a competitive ratio of $(3 + \frac{1}{\text{\sffamily L}-1})$ for \textbf{dSP}. A longer contract period {\sffamily L} may lead the heuristic online algorithm to be more competitive. See Appendix \ref{pro:conj} for more discussion.

Note that OFA$_s$ cannot be served as the optimal offline algorithm for \textbf{dSP}, because the backward recurrence relation does not hold here. However, dynamic programming can always be used to construct offline algorithms as in \cite{chase, disc}. Although dynamic programming may incur a high time and space complexity, it is enough for us to guarantee optimal results. Furthermore, we anticipate that the  gCHASE$_s^r$ should also be competitive by inserting $\hat{\Delta}(t)$.

\section{Empirical Evaluations}
\label{sec:empi}

By using real-world traces, we evaluate our proposed algorithms under different settings in this section. Our objectives are threefold: ({\romannumeral 1}) comparing our proposed deterministic and randomized online algorithms by comparing their cost ratios against the optimal offline algorithm in various scenarios, ({\romannumeral 2}) analyzing the effect of different types of household demand and values of the cancellation fee, and ({\romannumeral 3}) justifying the necessity of developing proper online algorithms for our newly proposed problem \textbf{dSP}.

\subsection{Dataset and Parameters}

\subsubsection{Electricity Demand} The demand traces are from Office of Energy Efficiency \& Renewable Energy (EERE) \cite{openeidata}. We use the electrical dataset `EPLUS TMY3 residential', which contains three demand types of family apartments in 2013, i.e., low, base, and high with average monthly usage of {450, 900, 1430 (kWh)} respectively. Each type contains 936 unique files, each representing one year of electricity consumption for one home in a particular area in the USA \cite{smartcity}. 

\subsubsection{Energy Plans} Energy plans  in New York State \cite{powertochooseNYS} are chosen as representatives. We select four distinct energy retailers from the large number of suppliers. The four retailers, East Coast Power \& Gas, LLC, Eligo Energy NY, LLC, New York Gas \& Electric, and Renaissance Power \& Gas, Inc, all set constant cancellation fee (\$100) for a 12-month fixed-rate plan. The results here should hold similarly in other places.

\subsubsection{Electricity Prices} As only seasonal data in 2018 is shown for New York state (Zip code: 10001), we use interpolation to obtain monthly prices ($p^0_t$ and $p^1_t$). We set $H$ to be $p^0_t$ for the fixed-rate plan.

\subsubsection{Contract Period and Cancellation Fee} The fixed-rate plan has a length of 12 months and the cancellation fee of \$100. For the linearly decreasing cancellation fee plan, \$10 is charged for each month remaining in the contract when canceling the plan. These settings are consistent with most available plans in the market.

\subsubsection{Cost Benchmark} Following discussions in Section \ref{sec:intro}, baseline households do not change plans throughout the year. We set them choosing the fixed-rate plan offered by Eligo Energy NY, LLC. We evaluate their new cost by applying various algorithms.

\subsubsection{Comparisons of Algorithms} Algorithm OFA$_s$, gCHASE$_s$, gCHASE$_s^r$, and Greedy are compared under the same application scenario, where Greedy is a na\"ive online algorithm. OFA$_s$ is fed with all future input information at the beginning, whereas gCHASE$_s$, gCHASE$_s^r$, and Greedy are not given current input until the time arrives. The randomized online algorithm gCHASE$_s^r$ is judged by its average outcome over 100 times run in the same setting.

\begin{figure}[t]
\centering
  \begin{minipage}[b]{0.2\textwidth}
    	\includegraphics[width=\textwidth]{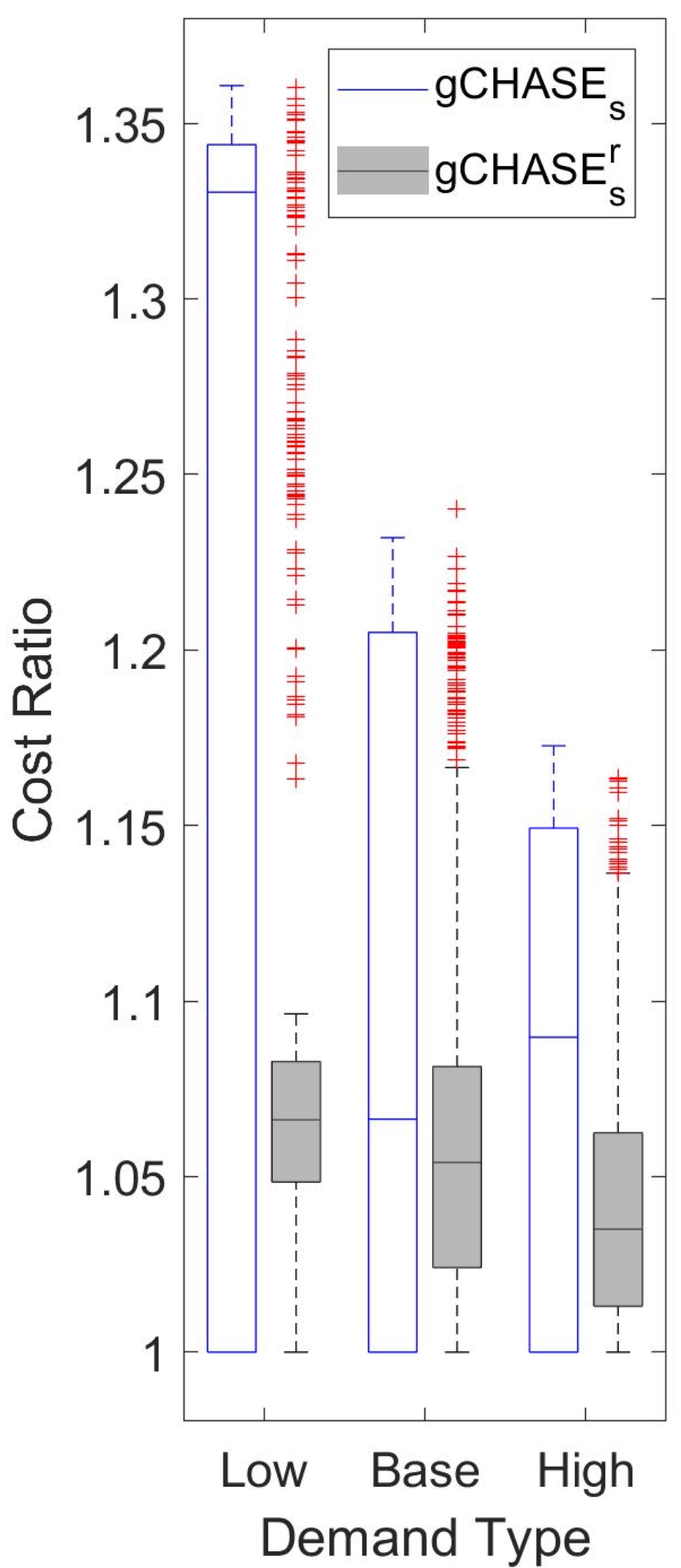}  
    	\caption{Constant Cancellation Fee -- cost ratio of online algorithms comparing to the offline algorithm under three demand types for all families.} 
    	\label{fig:const_cancel}
  \end{minipage}
  \quad
  \begin{minipage}[b]{0.205\textwidth}
    	\includegraphics[width=\textwidth]{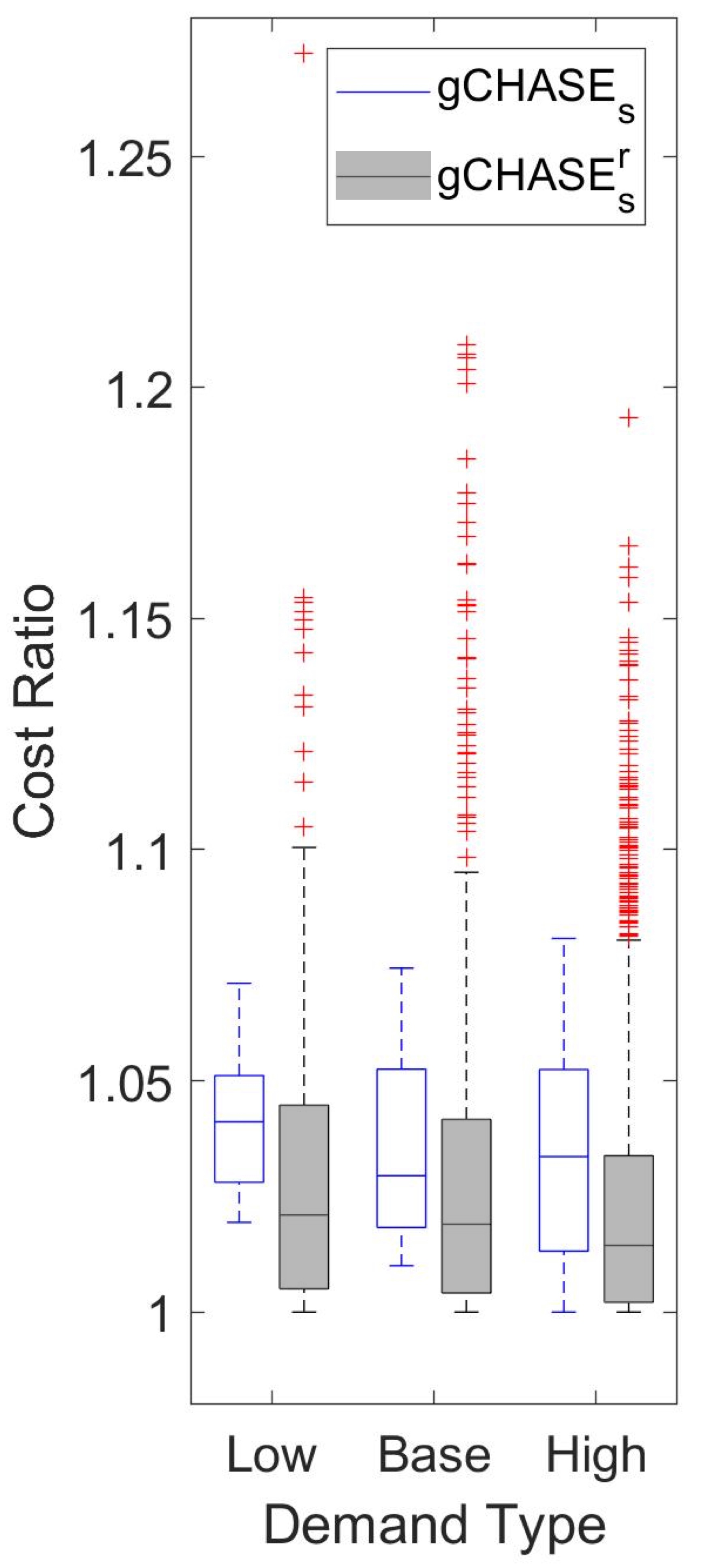}  
    	\caption{Linear Decreasing Cancellation Fee -- cost ratio of online algorithms comparing to the offline algorithm under three demand types for all families.} 
    	\label{fig:linear_cancel}
  \end{minipage}
\end{figure}

\subsection{Constant Cancellation Fee}
\label{sec:const}

\subsubsection{Purpose}
In this setting, we aim to answer two questions.
First, how well can our proposed online algorithms behave comparing to the optimal one? The offline algorithm needs full-time input before time starts, which is not practical for households to use. On the other hand, online algorithms do not require any predictions or stochastic models of future inputs, while their performances need to be validated.
Second, what are cost-saving differences between different types of families? Will the effect of our algorithms change with input mean value? Although prediction on the exact monthly energy usage is hard for an individual family, categorizing the overall average usage shall be applicable. The results aim to shed light on the relationship between the benefit of using an online algorithm and the demand-type, and give directions to families later.

\subsubsection{Observations}

From Figure \ref{fig:const_cancel}, we observe cost ratios between online algorithms and $\text{OFA}_s$ are smaller than 1.4 for all families, which is much less than theoretical competitive ratios in Theorem \ref{thm:det} and \ref{thm:gchaser}. It means that online algorithms cost no more than an additional 37\% expense compared to the optimal offline algorithm. Further, the average cost ratio of gCHASE$_s^r$ is considerably smaller than that of gCHASE$_s$. Specifically, most of the households have no more than an additional 9\% expense than the optimal offline algorithm. These results justify the importance of switching energy plans properly and introducing randomization. 

As for demand types, we note that online algorithms' behaviors get better as average demand increases. Since the cancellation fee is fixed, the impact of incurring the cancellation fee fades away when the high inevitable energy cost is due.

{\subsection{Linearly Decreasing Cancellation Fee}}

\subsubsection{Purpose} For the case when the cancellation fee is linearly decreasing with the time of enrollment in a fixed-rate plan, we implement the same online algorithms to cost data as in Section \ref{sec:const}. The optimal offline solution is derived by applying the dynamic programming technique.
In this experiment, we would like to identify the performances of our proposed online algorithms and influences of changing problem settings, aiming to find out if low cost ratios can still be maintained. 

\subsubsection{Observations}
Comparing to Figure \ref{fig:const_cancel},
Figure \ref{fig:linear_cancel} shows dramatic changes in the outcome of the deterministic online algorithm gCHASE$_s$ for all demand types.
Since the cancellation fee decreases as time pass by, the penalty of wrong decisions become relatively less. Whereas for the randomized online algorithm gCHASE$_s^r$, as it may change states frequently and bring in a high cancellation fee, it cannot guarantee its superiority comparing to gCHASE$_s$. Overall, we conclude that online algorithms have a more significant cost reduction in this section comparing to the constant cancellation fee setting, although some individual households may suffer more than applying gCHASE$_s$.

\begin{figure}[t]
\centering
    \begin{minipage}[b]{0.5\textwidth}
    	\includegraphics[width=\textwidth]{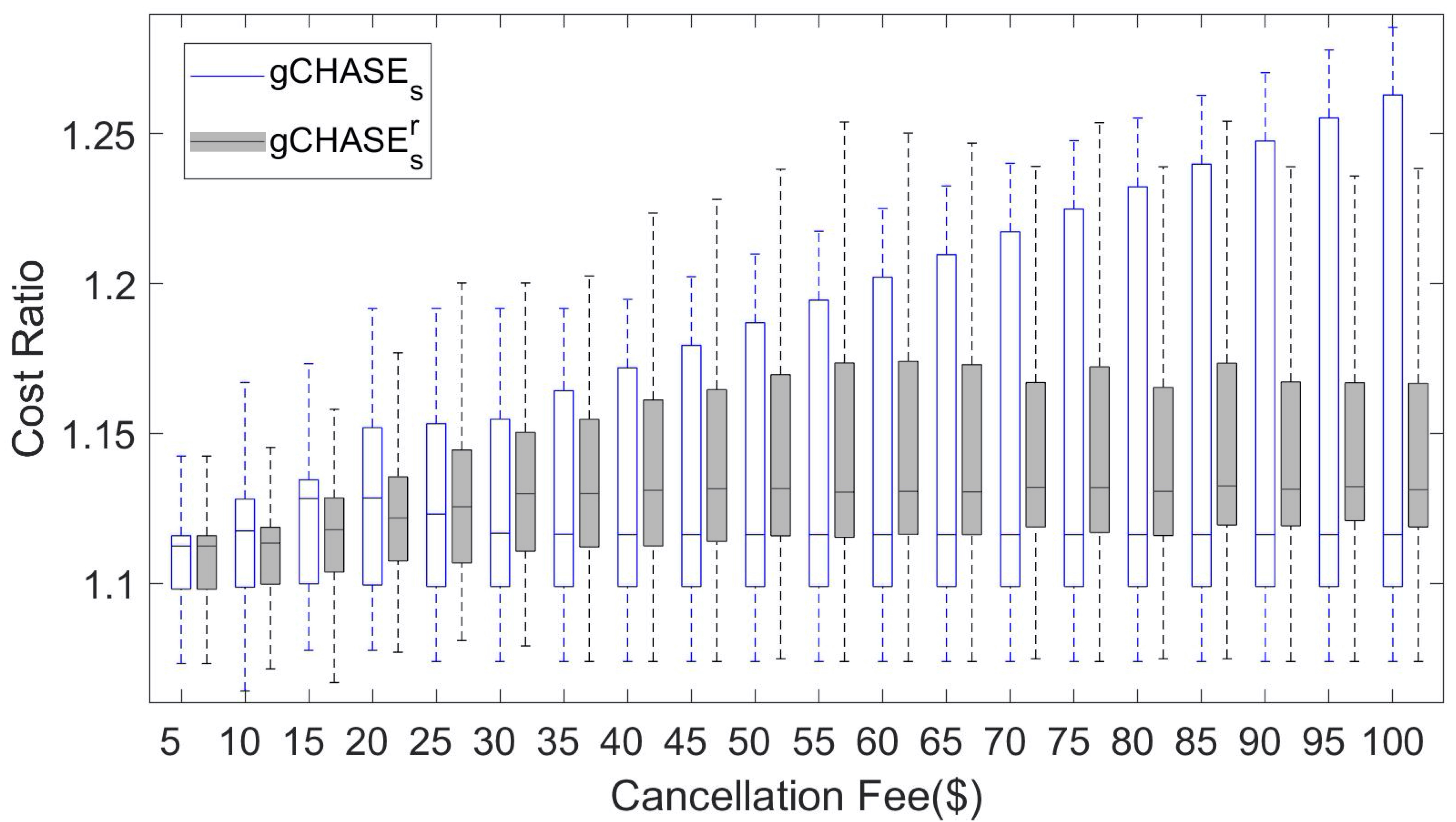}  
    	\caption{Change of Constant Cancellation Fee Under Base Demand Type -- cost ratio of online algorithms comparing to the offline algorithm for all families.} 
    	\label{fig:save_beta_box_cr}
  \end{minipage}
\end{figure}
\begin{figure}[t]
\centering
  \begin{minipage}[b]{0.5\textwidth}
    	\includegraphics[width=\textwidth]{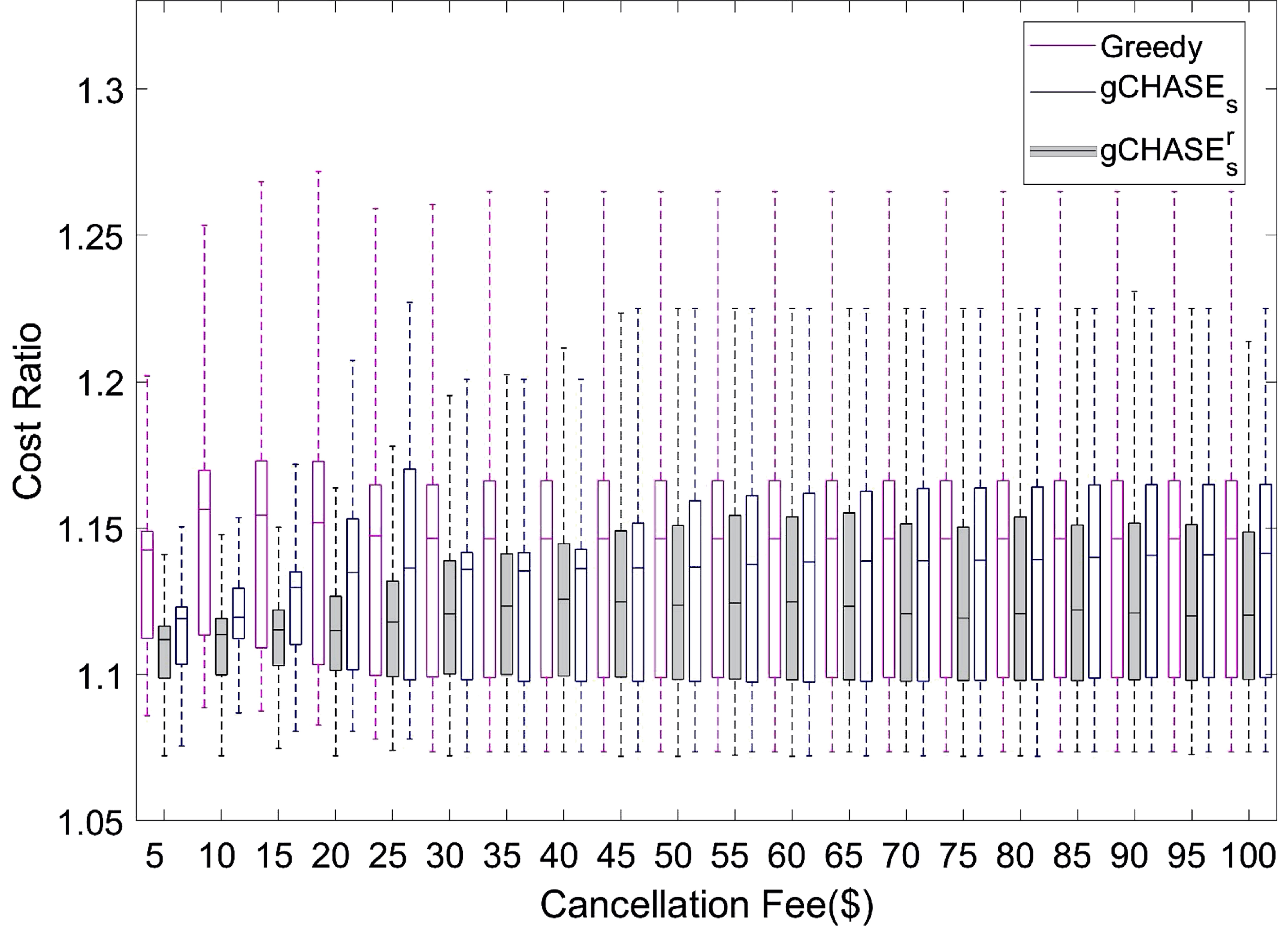}  
    	\caption{Change of Linear Decreasing Cancellation Fee Under Base Demand Type -- cost ratio of online algorithms comparing to the offline algorithm for all families.} 
    	\label{fig:save_beta_dim_box_cr}
  \end{minipage}
\end{figure}

\subsection{Effect of Cancellation Fee}
\subsubsection{Purpose} 
Previous experiments show that our proposed online algorithms guarantee small cost ratios. This section is designed to investigate the impact of changing the cancellation fee on cost ratios while keeping other parameters unchanged. Based on intuition, a high cancellation fee prevents consumers from taking the initiative in choosing to change plans due to their fear of heavy punishment. 

Mathematically, the plan length {\sffamily L} stays still, and cancellation fee $\beta$ varies from \$5 to \$100, with a \$5 increment between every two trials. For the ease of demonstration, we only show results of the `base' demand type. It is foreseeable that other demand types follow. As the lack of theoretical guarantee for problem \textbf{dSP}, we add a na\"ive online algorithm Greedy to be compared with. Figure \ref{fig:save_beta_box_cr} and \ref{fig:save_beta_dim_box_cr} plot the cost ratio change of all `base' type families against 20 distinct cancellation fee settings.

\subsubsection{Observations}
First, Figure \ref{fig:save_beta_box_cr} and \ref{fig:save_beta_dim_box_cr} generally reveal the same trends for gCHASE$_s$ and gCHASE$_s^r$. Cost ratio increases with the growth of the cancellation fee. When cancellation fee takes a large value, the impact of making a wrong decision incurs a higher penalty and its competitiveness decreases. It seems that the average cost ratio of all algorithms becomes `saturated' when the cancellation fee is high, which may imply the choice of cancellation fee for these companies in the real world.
Second, in each trial, the randomized online algorithm always performs better than the deterministic one on average, and the linear decreasing cancellation fee setting suggests smaller cost ratios. In Figure \ref{fig:save_beta_box_cr}, the smaller the competitive ratio is, the greater the dominance gCHASE$_s^r$ has. Whereas Figure \ref{fig:save_beta_dim_box_cr} indicates that this randomization may not gain much improvement in linear decreasing cancellation fee case. Third, although problem \textbf{dSP} is a new and hard challenge, our proposed online algorithms are much better than the na\"ive Greedy algorithm, which justifies the importance of developing competitive online algorithms.

\subsection{To Stay or to Switch}

After showing performances of our proposed online algorithms in constant and linearly decreasing cancellation fee settings, we provide some suggestions to households on choosing energy plans:
\begin{itemize}
    \item Households should always choose the least expensive variable-rate plan for the current month from all available ones provided by retailers.
    \item Rank fixed-rate plans by their average monthly rate. Households should better choose the one with a low rate rank, a low cancellation fee (average \$ / month) and/or a linearly decreasing setting.
    \item Households should consider implementing a randomized online algorithm.
\end{itemize}
\section{conclusion and Future Work}
\label{sec:conc}

We present effective online decision algorithms in this paper, aiming to save energy cost in a competitive energy retail market by providing the energy plan selection. Both offline optimal and competitive online algorithms are stated and proved. In the constant switching cost scenario, our deterministic and randomized online algorithms are proved to achieve the smallest possible competitive ratios in their classes. An online heuristic algorithm is developed and conjectured to perform well in a more general case with linearly decreasing switching costs.
Using real-world data traces, the effectiveness of the proposed algorithms is corroborated.
Meanwhile, empirical evaluations reveal that considerable energy cost savings can be brought to households by opportunistic switching among energy plans. 

As a further step, we intend to apply our algorithms to more real-world data traces and judge their performances. Moreover, we desire to study a class of metrical task system problems with temporally dependent switching costs. We consider it is theoretically important and practically useful to obtain competitive online algorithms for such problems. The extension should be integral in solving not only general energy plan selection problems but also other online decision problems in broad applications.



\bibliographystyle{IEEEtran}
\bibliography{ref}

\begin{thebibliography}{10}
\providecommand{\url}[1]{#1}
\csname url@samestyle\endcsname
\providecommand{\newblock}{\relax}
\providecommand{\bibinfo}[2]{#2}
\providecommand{\BIBentrySTDinterwordspacing}{\spaceskip=0pt\relax}
\providecommand{\BIBentryALTinterwordstretchfactor}{4}
\providecommand{\BIBentryALTinterwordspacing}{\spaceskip=\fontdimen2\font plus
\BIBentryALTinterwordstretchfactor\fontdimen3\font minus \fontdimen4\font\relax}
\providecommand{\BIBforeignlanguage}[2]{{%
\expandafter\ifx\csname l@#1\endcsname\relax
\typeout{** WARNING: IEEEtran.bst: No hyphenation pattern has been}%
\typeout{** loaded for the language `#1'. Using the pattern for}%
\typeout{** the default language instead.}%
\else
\language=\csname l@#1\endcsname
\fi
#2}}
\providecommand{\BIBdecl}{\relax}
\BIBdecl

\bibitem{eenergy}
\BIBentryALTinterwordspacing
J.~Zhai, S.~C.-K. Chau, and M.~Chen, ``Stay or switch: Competitive online algorithms for energy plan selection in energy markets with retail choice,'' in \emph{Proceedings of the Tenth ACM International Conference on Future Energy Systems}, ser. e-Energy ’19.\hskip 1em plus 0.5em minus 0.4em\relax New York, NY, USA: Association for Computing Machinery, 2019, p. 100–110. [Online]. Available: \url{https://doi-org.easyaccess1.lib.cuhk.edu.hk/10.1145/3307772.3328287}
\BIBentrySTDinterwordspacing

\bibitem{zhou17retail}
S.~Zhou, ``An introduction to retail electricity choice in the united states,'' NREL, Tech. Rep., 2017.

\bibitem{MK16retail}
M.~J. Morey and L.~D. Kirsch, ``Retail choice in electricity: What have we learned in 20 years?'' Electric Markets Research Foundation, Tech. Rep., 2016.

\bibitem{WML12vpp}
J.~K. Ward, T.~Moore, and S.~Lindsay, ``The virtual power station - achieving dispatchable generation from small scale solar,'' CSIRO, Tech. Rep., 2012.

\bibitem{L18retail}
S.~Littlechild, ``The regulation of retail competition in us residential electricity markets,'' University of Cambridge, Tech. Rep., 2018.

\bibitem{doc17texas}
``Choice to choose: The journey to a competitive electricity market in texas,'' 2017, [Documentary].

\bibitem{ofgem18suppliers}
Ofgem, ``Number of active domestic suppliers by fuel type,'' \url{https://www.ofgem.gov.uk/data-portal/number-active-domestic-suppliers-fuel-type-gb}, 2018, [Online].

\bibitem{AEMC18retail}
``Retail energy competition review,'' Australian Energy Market Commission, Tech. Rep., 2018.

\bibitem{eia18decline}
U.~E.~I. Administration, ``Electricity residential retail choice participation has declined since 2014 peak,'' \url{https://www.eia.gov/todayinenergy/detail.php?id=37452}, 2018, [Online].

\bibitem{GRSCH18retail}
F.~Graves, A.~Ros, S.~Sergici, R.~Carroll, and K.~Haderlein, ``Retail choice: Ripe for reform?'' 2018, [Presentation].

\bibitem{ny.gov}
\BIBentryALTinterwordspacing
N.~Y. State, ``Power to choose,'' 2020, accessed = 2020-1-30. [Online]. Available: \url{http://documents.dps.ny.gov/PTC}
\BIBentrySTDinterwordspacing

\bibitem{gov.au}
\BIBentryALTinterwordspacing
A.~E. Regulator, ``Energy made easy,'' 2020, accessed = 2020-1-30. [Online]. Available: \url{http://EnergyMadeEasy.gov.au}
\BIBentrySTDinterwordspacing

\bibitem{powertochoose}
P.~U.~C. of~Texas, ``Power to choose,'' \url{http://www.powertochoose.org/}, 2018, [Online].

\bibitem{powertochooseNYS}
D.~of~Public Service New York~State, ``Nys power to choose,'' \url{http://documents.dps.ny.gov/PTC/home}, 2018, [Online].

\bibitem{energymadeeasy}
A.~E. Regulator, ``Energy made easy,'' \url{https://www.energymadeeasy.gov.au/}, 2018, [Online].

\bibitem{ofgem18compare}
Ofgem, ``Compare gas and electricity tariffs: Ofgem-accredited price comparison sites,'' \url{https://www.ofgem.gov.uk/consumers/household-gas-and-electricity-guide/how-switch-energy-supplier-and-shop-better-deal/compare-gas-and-electricity-tariffs-ofgem-accredited-price-comparison-sites}, 2018, [Online].

\bibitem{theguardian}
T.~Guardian, ``Choice launches energy service that will automatically switch customers to best deal,'' \url{https://www.theguardian.com/australia-news/2018/may/07/choice-launches-energy-service-that-will-automatically-switch-customers-to-best-deal}, 2018, [Online].

\bibitem{BEY05online}
A.~Borodin and R.~El-Yaniv, \emph{Online Computation and Competitive Analysis}.\hskip 1em plus 0.5em minus 0.4em\relax Cambridge University Press, 2005.

\bibitem{oco}
N.~Bansal, A.~Gupta, R.~Krishnaswamy, K.~Pruhs, K.~Schewior, and C.~Stein, ``A 2-competitive algorithm for online convex optimization with switching costs,'' \emph{APPROX}, 2015.

\bibitem{mts}
B.~Allan, N.~Linial, and M.~E. Saks, ``An optimal on-line algorithm for metrical task system,'' \emph{JACM}, 1992.

\bibitem{chase}
L.~Lu, J.~Tu, C.-K. Chau, M.~Chen, and X.~Lin, ``Online energy generation scheduling for microgrids with intermittent energy sources and co-generation,'' \emph{ACM SIGMETRICS}, 2013.

\bibitem{lcp}
M.~Lin, A.~Wierman, L.~L.~H. Andrew, and E.~Thereska, ``Dynamic right-sizing for power-proportional data centers,'' \emph{IEEE/ACM TON}, 2013.

\bibitem{disc}
S.~Albers and J.~Quedenfeld, ``Optimal algorithms for right-sizing data centers,'' \emph{SPAA}, 2018.

\bibitem{disc-ext}
------, ``Optimal algorithms for right-sizing data centers -- extended version,'' \emph{arXiv:1807.05112}, 2018.

\bibitem{tight}
A.~Antoniadis and K.~Schewior, ``A tight lower bound for online convex optimization with switching costs,'' \emph{WAOA}, 2017.

\bibitem{eastcoast}
E.~C. P.~. Gas, ``General temrs \& conditions -- new york,'' \url{http://www.ecpg.com/terms}, 2019, [Online].

\bibitem{rchase}
M.~H. Hajiesmaili, C.-K. Chau, M.~Chen, and L.~Huang, ``Online microgrid energy generation scheduling revisited: The benefits of randomization and interval prediction,'' \emph{ACM e-Energy}, 2016.

\bibitem{openeidata}
EERE, ``Commercial and residential hourly load profiles for all tmy3 locations in the united states,'' \url{https://openei.org/doe-opendata/dataset/commercial-and-residential-hourly-load-profiles-for-all-tmy3-locations-in-the-united-states}, 2013, [Online].

\bibitem{smartcity}
C.~Costa and M.~Santos, \emph{Reinventing the Energy Bill in Smart Cities with NoSQL Technologies}, Jan 2016, pp. 383--396.

\end{thebibliography}

\clearpage
\appendices
\appendix

\subsection{Proof of Proposition \ref{thm:sp}}
\label{pro:sp}
To show that (\ref{prob1}) and (\ref{prob2}) are equivalent, it suffices to show that 
\begin{equation}
\sum ^{T+1}_{t=1}\beta (s_{t}-s_{t-1})^+=
\sum ^{T+1}_{t=1}\frac{\beta}{2}\left|s_{t}-s_{t-1}\right|
\label{equ-provebeta}
\end{equation}
Since $s_0=s_{T+1}=0$, we obtain
\begin{equation}
\label{equ-pos}
\sum ^{T+1}_{t=1}(s_{t}-s_{t-1})=0
\end{equation}
By the definition of operation $(\cdot)^+$ and $\left|\cdot\right|$, we obtain
\begin{align}
\label{equ-cases1}
s_{t}-s_{t-1}&=
\begin{cases}
(s_{t}-s_{t-1})^+, s_{t}-s_{t-1}\geq 0\\
-(s_{t-1}-s_{t})^+, s_{t}-s_{t-1}<0
\end{cases}\\
\label{equ-cases2}
\left|s_{t}-s_{t-1}\right|&=
\begin{cases}
(s_{t}-s_{t-1})^+, s_{t}-s_{t-1}\geq 0\\
(s_{t-1}-s_{t})^+, s_{t}-s_{t-1}<0
\end{cases}
\end{align}
By substituting (\ref{equ-cases1}) into (\ref{equ-pos}), we obtain
\begin{align}
0=\sum ^{T+1}_{t=1}(s_{t}-s_{t-1})=
&\sum ^{T+1}_{t=1}(s_{t}-s_{t-1})^+\Big|_{s_{t}-s_{t-1}\geq 0}
\notag\\
&-\sum ^{T+1}_{t=1}(s_{t-1}-s_{t})^+\Big|_{s_{t}-s_{t-1}<0}
\label{equ-zero}
\end{align}
Further, we substitute (\ref{equ-cases2}) into (\ref{equ-zero}) and obtain
\begin{equation}
\sum ^{T+1}_{t=1}\left|s_{t}-s_{t-1}\right|\Big|_{s_{t}-s_{t-1}\geq 0}
=\sum ^{T+1}_{t=1}\left|s_{t}-s_{t-1}\right|\Big|_{s_{t}-s_{t-1}<0}
\end{equation}
Therefore,
\begin{align}
\sum ^{T+1}_{t=1}\left|s_{t}-s_{t-1}\right|=&
2\sum ^{T+1}_{t=1}\left|s_{t}-s_{t-1}\right|\Big|_{s_{t}-s_{t-1}\geq 0}\\
=&2\sum ^{T+1}_{t=1}(s_{t}-s_{t-1})^+\Big|_{s_{t}-s_{t-1}\geq 0}
\end{align}
By multiplying $\frac{\beta}{2}$ on both sides, we obtain (\ref{equ-provebeta}), which completes the proof.
\qed

\subsection{Proof of Theorem \ref{thm:ofa}}
\label{pro:ofa}

Let $\bm{z} \triangleq (z_1, \cdots, z_T)$ be the solution vector of OFA$_s$. Suppose $\bm{z^*} \triangleq (z^*_1, \cdots, z^*_T)$ is an optimal solution for \textbf{SP} and different from $\bm{z}$. We use contradiction to prove by showing that ${\rm Cost}(\bm{z^*}) \geq {\rm Cost}(\bm{z})$ holds for any input sequences.

Divide $[0,T+1]$ into segments:
\begin{align}
[T_0, T_1], \cdots, [T_i + 1, T_{i+1}], \cdots, [T_{2k} + 1, T_{2k+1}],
\end{align}
where $T_0 = 0$, $T_{2k+1} = T+1$, and
\begin{subequations}
\begin{numcases}{z_t = }
0, \quad t\in [T_{2i} + 1, T_{2i+1}], \quad \forall i\in [0,k]\\
1, \quad t\in [T_{2i+1} + 1, T_{2i+2}] \quad \forall i\in [0,k-1]
\end{numcases}
\end{subequations}
with $T_{2i+1} \geq T_{2i} + 1$.

Define for any $0\leq \tau_l \leq \tau_r \leq T+1$,
\begin{align}
\label{def:cost_tau}
{\rm Cost}(\bm{s_{[\tau_l,\tau_r]}}) \triangleq \sum_{t = \tau_l}^{\tau_r} \Big(g_t(s_t) + \beta \cdot (s_t - s_{t - 1})^+\Big) + \beta \cdot (s_{\tau_r + 1} - s_{\tau_r})^+
\end{align}
where $s_{-1} = s_{T+1} = 0$.

\textsl{Case-1}: $z^*_t = 1$, for any $t \in [\tau_l,\tau_r] \subseteq [T_{2i} + 1, T_{2i+1}]$

From Algorithm 1, we conclude that
\begin{subequations}
	\begin{numcases}{\Delta(t)}
	= 0, \quad t = T_{2i},\\
	\in [-\beta, 0), \quad t\in [T_{2i} + 1, T_{2i+1}],\\
	= -\beta, \quad t = T_{2i+1}.
	\end{numcases}
\end{subequations}

By (\ref{equ-delta}) and (\ref{equ-Delta}), for any $t \in [T_{2i} + 1, T_{2i+1}]$,
\begin{align}
\Delta(t)= & \max\{\Delta(t-1)+\delta(t), -\beta\}\\
\geq &  \Delta(t-1) + g_t(0) - g_t(1).
\end{align}

Then we obtain
\begin{align}
{\rm Cost} & (\bm{z^*_{[\tau_l,\tau_r]}}) - {\rm Cost}(\bm{z_{[\tau_l,\tau_r]}})\notag\\
& = \sum_{t = \tau_l}^{\tau_r}\Big(g_t(1)-g_t(0)\Big) + \beta \cdot (1-z^*_{\tau_l - 1}) - \beta \cdot z_{\tau_r + 1}\\
& \geq \sum_{t = \tau_l}^{\tau_r}\Big(\Delta(t-1) - \Delta(t)\Big)  + \beta \cdot (1 - z^*_{\tau_l - 1} - z_{\tau_r + 1}) \\
& = \Delta(\tau_l-1) - \Delta(\tau_r) + \beta \cdot (1 - z^*_{\tau_l - 1} - z_{\tau_r + 1})
\end{align}

\textsl{Case-1a}: $\tau_l = T_{2i} + 1$ and $\tau_r = T_{2i+1}$
\begin{align}
{\rm Cost}(\bm{z^*_{[\tau_l,\tau_r]}}) - & {\rm Cost}(\bm{z_{[\tau_l,\tau_r]}})\notag\\
& \geq 0 - (-\beta) + \beta \cdot (1 - z^*_{\tau_l - 1} - 1) \geq 0
\end{align}

\textsl{Case-1b}: $\tau_l > T_{2i} + 1$ and $\tau_r = T_{2i+1}$
\begin{align}
{\rm Cost}(\bm{z^*_{[\tau_l,\tau_r]}}) - & {\rm Cost}(\bm{z_{[\tau_l,\tau_r]}})\notag\\
& \geq \Delta(\tau_l-1) - (-\beta) + \beta \cdot (1 - 0 - 1)  \geq 0
\end{align}

\textsl{Case-1c}: $\tau_l = T_{2i} + 1$ and $\tau_r < T_{2i+1}$
\begin{align}
{\rm Cost}(\bm{z^*_{[\tau_l,\tau_r]}}) - & {\rm Cost}(\bm{z_{[\tau_l,\tau_r]}})\notag\\
& \geq 0 - \Delta(\tau_r) + \beta \cdot (1 - z^*_{\tau_l - 1} - 0)  \geq 0
\end{align}

\textsl{Case-1d}: $\tau_l > T_{2i} + 1$ and $\tau_r < T_{2i+1}$
\begin{align}
{\rm Cost}(\bm{z^*_{[\tau_l,\tau_r]}}) - & {\rm Cost}(\bm{z_{[\tau_l,\tau_r]}})\notag\\
& \geq \Delta(\tau_l-1) - \Delta(\tau_r) + \beta \cdot (1 - 0 - 0)  \geq 0
\end{align}

\textsl{Case-2}: $z^*_t = 0$, for any $t \in [\tau_l,\tau_r] \subseteq [T_{2i+1} + 1, T_{2i+2}]$

From Algorithm 1, we conclude that
\begin{subequations}
	\begin{numcases}{\Delta(t)}
	= -\beta, \quad t = T_{2i+1},\\
	\in (-\beta, 0], \quad t\in [T_{2i+1} + 1, T_{2i+2}],\\
	= 0, \quad t = T_{2i+2}.
	\end{numcases}
\end{subequations}

By (\ref{equ-delta}) and (\ref{equ-Delta}), for any $t \in [T_{2i+1} + 1, T_{2i+2}]$,
\begin{align}
\Delta(t)= & \min\{\Delta(t-1)+\delta(t), 0\}\\
\leq &  \Delta(t-1) + g_t(0) - g_t(1)
\end{align}

Then we obtain
\begin{align}
{\rm Cost} & (\bm{z^*_{[\tau_l,\tau_r]}}) - {\rm Cost}(\bm{z_{[\tau_l,\tau_r]}})\notag\\
& = \sum_{t = \tau_l}^{\tau_r}\Big(g_t(0)-g_t(1)\Big) + \beta \cdot z^*_{\tau_r + 1} - \beta \cdot (1-z_{\tau_l - 1})\\
& \geq \sum_{t = \tau_l}^{\tau_r}\Big(\Delta(t) - \Delta(t-1)\Big) + \beta \cdot (z^*_{\tau_r + 1} + z_{\tau_l - 1} - 1)\\
& = \Delta(\tau_r) - \Delta(\tau_l-1) + \beta \cdot (z^*_{\tau_r + 1} + z_{\tau_l - 1} - 1)
\end{align}

\textsl{Case-2a}: $\tau_l = T_{2i+1} + 1$ and $\tau_r = T_{2i+2}$
\begin{align}
{\rm Cost}(\bm{z^*_{[\tau_l,\tau_r]}}) - & {\rm Cost}(\bm{z_{[\tau_l,\tau_r]}})\notag\\
& \geq 0 - (-\beta) + \beta \cdot (z^*_{\tau_r + 1} + 0 - 1) \geq 0
\end{align}

\textsl{Case-2b}: $\tau_l > T_{2i+1} + 1$ and $\tau_r = T_{2i+2}$
\begin{align}
{\rm Cost}(\bm{z^*_{[\tau_l,\tau_r]}}) - & {\rm Cost}(\bm{z_{[\tau_l,\tau_r]}})\notag\\
& \geq 0 - \Delta(\tau_l-1) + \beta \cdot (z^*_{\tau_r + 1} + 1 - 1)  \geq 0
\end{align}

\textsl{Case-2c}: $\tau_l = T_{2i+1} + 1$ and $\tau_r < T_{2i+2}$
\begin{align}
{\rm Cost}(\bm{z^*_{[\tau_l,\tau_r]}}) - & {\rm Cost}(\bm{z_{[\tau_l,\tau_r]}})\notag\\
& \geq \Delta(\tau_r) - (-\beta) + \beta \cdot (1 + 0 - 1)  \geq 0
\end{align}

\textsl{Case-2d}: $\tau_l > T_{2i+1} + 1$ and $\tau_r < T_{2i+2}$
\begin{align}
{\rm Cost}(\bm{z^*_{[\tau_l,\tau_r]}}) - & {\rm Cost}(\bm{z_{[\tau_l,\tau_r]}})\notag\\
& \geq \Delta(\tau_r) - \Delta(\tau_l-1) + \beta \cdot (1 + 1 - 1)  \geq 0
\end{align}

In summation, as long as there exists $z_t^* \neq z_t$ for some $t$, ${\rm Cost}(\bm{z^*}) \geq {\rm Cost}(\bm{z})$ always holds. Therefore $\bm{z_t}$ is an optimal offline solution.
\qed

\subsection{Proof of Theorem \ref{thm:det}}
\label{pro:det}

Let $\bm{z} \triangleq (z_1, \cdots, z_T)$ be the solution vector of OFA$_s$ and $\bm{x} \triangleq (x_1, \cdots, x_T)$ be the solution vector of gCHASE$_s$. We prove that  ${\rm Cost}(\bm{x}) \leq 3{\rm Cost}(\bm{z})$ holds for any input sequences.

If ${\rm Cost}(\bm{z}) < \beta$, both $\bm{z}$ and $\bm{x}$ must be zero vectors, then ${\rm Cost}(\bm{x}) = {\rm Cost}(\bm{z})$. In the following, we only consider the case where ${\rm Cost}(\bm{z}) \geq \beta$.

From Algorithm 1 and Algorithm 2, we analyze time segments where $\bm{z}$ and $\bm{x}$ are different.

Define for any $1\leq \tau_l \leq \tau_r \leq T$,
\begin{align}
{\rm Cost}(\bm{s_{[\tau_l,\tau_r]}}) \triangleq \sum_{t = \tau_l}^{\tau_r} \Big(g_t(s_t) + \beta \cdot (s_t - s_{t - 1})^+\Big).
\end{align}

\textsl{Case-1:} $z_t = 0$ and $x_t = 1$, for any $t\in (\tau_l,\tau_r)$, where
\begin{subequations}
\begin{numcases}{\Delta(t)}
	= 0, \quad t = \tau_l,\\
	\in (-\beta, 0), \quad t \in (\tau_l,\tau_r),\\
	= -\beta, \quad t = \tau_r.
\end{numcases}
\end{subequations}
Let the union of such time segments $[\tau_l,\tau_r]$ be $\mathcal{T}^d$.

By (\ref{equ-delta}) and (\ref{equ-Delta}), for any $t \in (\tau_l,\tau_r)$,
\begin{align}
\Delta(t)= \Delta(t-1) + g_t(0) - g_t(1).
\end{align}

Then we obtain
\begin{equation}
\begin{split}
{\rm Cost} & (\bm{x_{[\tau_l,\tau_r]}}) - {\rm Cost}(\bm{z_{[\tau_l,\tau_r]}})\\
& = \sum_{t = \tau_l + 1}^{\tau_r - 1}\Big( g_t(1) - g_t(0) \Big) + \beta \cdot (1 - x_{\tau_l - 1}) - \beta \cdot (z_{\tau_r + 1} - 0)\\
& = \Delta(\tau_l + 1) - \Delta(\tau_r - 1) + \beta \cdot (1 - 1) - \beta \cdot (0 - 0)\\
& = \Delta(\tau_l + 1) - \Delta(\tau_r - 1) \leq \beta
\end{split}
\end{equation}

\textsl{Case-2:} $z_t = 1$ and $x_t = 0$, for any $t\in (\tau_l,\tau_r)$, where
\begin{subequations}
\begin{numcases}{\Delta(t)}
= -\beta, \quad t = \tau_l,\\
\in (-\beta, 0), \quad t \in (\tau_l,\tau_r),\\
= 0, \quad t = \tau_r.
\end{numcases}
\end{subequations}
Let the union of such time segments $[\tau_l,\tau_r]$ be $\mathcal{T}^u$.

Similarly, we obtain
\begin{equation}
\begin{split}
{\rm Cost} & (\bm{x_{[\tau_l,\tau_r]}}) - {\rm Cost}(\bm{z_{[\tau_l,\tau_r]}})\\
& = \sum_{t = \tau_l + 1}^{\tau_r - 1}\Big( g_t(0) - g_t(1) \Big) + \beta \cdot (x_{\tau_r + 1} - 0) - \beta \cdot (z_{\tau_l - 1} - 0)\\
& = \Delta(\tau_r + 1) - \Delta(\tau_l - 1) + \beta \cdot (1 - 0) - \beta \cdot (1 - 0)\\
& = \Delta(\tau_r + 1) - \Delta(\tau_l - 1) \leq \beta
\end{split}
\end{equation}

Let the rest time in which $\bm{z}$ and $\bm{x}$ are the same be $\mathcal{T}^e \triangleq \mathcal{T}\backslash \mathcal{T}^d \backslash \mathcal{T}^u$.

By summing over all time segments, we obtain
\begin{equation}
\begin{split}
& \frac{{\rm Cost}(\bm{x})}{{\rm Cost}(\bm{z})}\\
= & \frac{\sum_{\mathcal{T}^d} {\rm Cost}(\bm{x_\tau}) + \sum_{\mathcal{T}^u} {\rm Cost}(\bm{x_\tau}) + \sum_{\mathcal{T}^e} {\rm Cost}(\bm{x_\tau})}{\sum_{\mathcal{T}^d} {\rm Cost}(\bm{z_\tau}) + \sum_{\mathcal{T}^u} {\rm Cost}(\bm{z_\tau}) + \sum_{\mathcal{T}^e} {\rm Cost}(\bm{z_\tau})}\\
\leq & \frac{\beta + \sum_{\mathcal{T}^d} {\rm Cost}(\bm{z_\tau}) + \beta + \sum_{\mathcal{T}^u} {\rm Cost}(\bm{z_\tau}) + \sum_{\mathcal{T}^e} {\rm Cost}(\bm{z_\tau})}{\sum_{\mathcal{T}^d} {\rm Cost}(\bm{z_\tau}) + \sum_{\mathcal{T}^u} {\rm Cost}(\bm{z_\tau}) + \sum_{\mathcal{T}^e} {\rm Cost}(\bm{z_\tau})}\\
= & 1 + 2 \frac{\beta}{{\rm Cost}(\bm{z})} \leq 1 + 2\frac{\beta}{\beta} = 3
\end{split}
\end{equation}

\qed

\subsection{Proof of Lemma \ref{lem-ran}}
\label{pro:lem-ran}

According to the definition in (\ref{equ:sp}), we have
\begin{equation*}
\left\{
\begin{aligned}
    {\mathbb E}[{\rm Cost}(\bm{s})] & = \sum_{t=1}^{T}\Big({\mathbb E}[g_t(s_t)] + \beta\mathbb{E}[(s_t-s_{t-1})^+]\Big)\\
    {\rm Cost}({\mathbb E}[\bm{s}]) & = \sum_{t=1}^{T}\Big(g_t({\mathbb E}[s_t]) + \beta(\mathbb{E}[s_t]-\mathbb{E}[s_{t-1}])^+\Big).
\end{aligned}
\right.
\end{equation*}

For the ease of notation, we denote $\mathbb{E}[s_t]$ by $\overline{s_t}$. To show ${\mathbb E}[{\rm Cost}(\bm{s})] = {\rm Cost}({\mathbb E}[\bm{s}])$, it is sufficient to prove the following equations hold:
\begin{subnumcases}{}
\mathbb{E}[g_t(s_t)]  = g_t(\overline{s_t}),  \forall t\in [1,T] \label{equ:lem-1}\\
\mathbb{E}[(s_t - s_{t-1})^+]  = (\overline{s_t}-\overline{s_{t-1}})^+,  \forall t\in [1,T] \label{equ:lem-2}
\end{subnumcases}

In problem \textbf{SP}, $s_t$ only takes value 0 or 1, therefore $\overline{s_t} = \text{Pr}[s_t = 1]$. We linear interpolate between $g_t(0)$ and $g_t(1)$ and get 
\begin{equation}
    g_t(x) \triangleq g_t(0) + \Big(g_t(1) - g_t(0)\Big)\cdot x, \forall x\in [0,1],
\end{equation}

Therefore, in (\ref{equ:lem-1}) we have
    \begin{align}
    \mathbb{E}[g_t(s_t)]
    & = g_t(0)\cdot\text{Pr}[s_t = 0] + g_t(1)\cdot\text{Pr}[s_t = 1] \notag \\
    & = g_t(0)(1-\text{Pr}[s_t = 1]) + g_t(1) \text{Pr}[s_t = 1]\notag\\
    & = g_t(0) + (g_t(1) -g_t(0))\text{Pr}[s_t = 1]\notag\\
    & = g_t(\text{Pr}[s_t = 1])\\
    & = g_t(\overline{s_t})
    \end{align}
    
Note that for arbitrary integer set $\mathcal{N}\triangleq\{0,1,\cdots,n\}$, if function $g_t(\cdot)$ is linear after linear interpolation between each two consecutive integers $i$ and $i+1$, where $i\in\mathcal{N}\setminus\{n\}$, i.e.,
\begin{equation}
    g_t(i+x) \triangleq g_t(i) + \Big(g_t(i+1) - g_t(i)\Big)\cdot x, \forall x\in [0,1],
\end{equation}
then (\ref{equ:lem-1}) still holds by the linear property of expectation.

Next, we show (\ref{equ:lem-2}) holds for arbitrary $\mathcal{N}$. By definition of expectation, $\overline{s_t} = \sum_{i=0}^{n}i\cdot\text{Pr}[s_t = i]$. Denote $\text{Pr}[s_{t-1} = i, s_{t} = j]$ by $p_t(i,j)$.

\textsl{Case-1: } $\text{Pr}[s_t > s_{t-1}] = 0$, i.e., $p_t(i,j) = 0, \forall i < j$.

It is straightforward that $\mathbb{E}\Big((s_t - s_{t-1})^+\Big) = 0$.
\begin{align}
\overline{s_t} - \overline{s_{t-1}}
& =
\sum_{j=0}^n \Big(j\cdot\sum_{i=0}^n p_t(i,j))- \sum_{i=0}^n (i\cdot\sum_{j=0}^n p_t(i,j)\Big)\\
& = 
\sum_{j=0}^{i-1} \Big(j\cdot\sum_{i=1}^n p_t(i,j)\Big)- \sum_{i=j+1}^{n} \Big(i\cdot\sum_{j=0}^{n-1} p_t(i,j)\Big)\\
& = 
\sum_{j=0}^{i-1} \sum_{i=1}^n \Big(j\cdot p_t(i,j)\Big)- \sum_{i=j+1}^{n} \sum_{j=0}^{n-1} \Big(i\cdot p_t(i,j)\Big)\\
& = 
\sum_{i=j+1}^{n} \sum_{j=0}^{n-1} \Big(j\cdot p_t(i,j)\Big)- \sum_{i=j+1}^{n} \sum_{j=0}^{n-1} \Big(i\cdot p_t(i,j)\Big)\\
& = 
\sum_{i=j+1}^{n} \sum_{j=0}^{n-1} \Big((j - i)\cdot p_t(i,j)\Big)\\
& < 0
\end{align}
Then we have $(\overline{s_t} - \overline{s_{t-1}})^+ = 0$.

\textsl{Case-2: } $\text{Pr}[s_t < s_{t-1}] = 0$, i.e., $p_t(i,j) = 0, \forall i > j$.
\begin{equation}
    \mathbb{E}\Big((s_t-s_{t-1})^+\Big) = \sum_{j=i+1}^n\sum_{i=0}^n(j-i)\cdot p_{t}(i,j)
\end{equation}
\begin{align}
\overline{s_t} - \overline{s_{t-1}}
& =
\sum_{j=0}^n \Big(j\cdot\sum_{i=0}^n p_t(i,j)\Big)- \sum_{i=0}^n \Big(i\cdot\sum_{j=0}^n p_t(i,j)\Big)\\
& = 
\sum_{j=i+1}^n \Big(j\cdot\sum_{i=0}^{n} p_t(i,j)\Big)- \sum_{i=0}^{j-1} \Big(i\cdot\sum_{j=1}^{n} p_t(i,j)\Big)\\
& = 
\sum_{j=i+1}^n \sum_{i=0}^{n} \Big(j\cdot p_t(i,j)\Big)- \sum_{i=0}^{j-1} \sum_{j=1}^{n} \Big(i\cdot p_t(i,j)\Big)\\
& = 
\sum_{j=i+1}^n \sum_{i=0}^{n} \Big(j\cdot p_t(i,j)\Big)- \sum_{j=i+1}^n \sum_{i=0}^{n} \Big(i\cdot p_t(i,j)\Big)\\
& = 
\sum_{j=i+1}^n \sum_{i=0}^{n} \Big((j - i)\cdot p_t(i,j)\Big)\\
& > 0
\end{align}
Then we have $\mathbb{E}[(s_t - s_{t-1})^+]  = (\overline{s_t}-\overline{s_{t-1}})^+$.
    
Therefore, ${\mathbb E}[{\rm Cost}(\bm{s})] = {\rm Cost}({\mathbb E}[\bm{s}])$ holds.

\subsection{Proof of rCHASE satisfying equation (\ref{equ-lem})}
\label{pro:chase}
According to the description of algorithm rCHASE, at each time  $t$, $\gamma_{\text{on}}\cdot(\gamma_{\text{off}} + \beta) = 0$ holds. In each critical segment begin with $\Delta(t) = -\beta$, i.e., type-0 and type-1, the algorithm only decides when to change $s_t$ from 0 to 1. Once the state is changed, the state during the rest time of the interval remains stable. We have $\text{Pr}[s_t < s_{t-1}] = 0$ in these segments. Similarly, in each critical segment begin with $\Delta(t) = -\beta$, i.e., type-0 and type-1, as the algorithm only decides when to change $s_t$ from 1 to 0, we have $\text{Pr}[s_t > s_{t-1}] = 0$.

\subsection{Proof of Theorem \ref{thm:gchaser}}
\label{pro:ran}

We extend problem \textbf{SP} to the continuous problem \textbf{cSP} defined as follows:
\begin{align}
\text{\textbf{cSP:} minimize \ } & {\rm Cost}(\bm{\sigma}) \triangleq \sum_{t=1}^{T}\Big(g_t(\sigma_t,x_t)+\beta\cdot (x_t-x_{t-1})^+\Big)\\
\text{subject to \ } & x_t\in [0, 1] 
\end{align}
where
\begin{align}
	g(\sigma_t,x) = \big(g(\sigma_t,1)-g(\sigma_t,0)\big) \cdot x + g(\sigma_t,0) , \forall x\in [0, 1]
\end{align}

And we define an online algorithm for problem \textbf{cSP}, as shown in {Algorithm \ref{alg-cchase}}.

\setcounter{algorithm}{3}    

\begin{algorithm}[h]
	\caption{$\text{cCHASE}_{s}^{r}$}
	\begin{algorithmic}
		\label{alg-cchase}
		\STATE Set $x_0 \leftarrow 0$
		\FOR{$t$ from 1 to $T$}
		\STATE Compute $\Delta(t)$
		\STATE $x_t \leftarrow \frac{\beta + \Delta(t)}{\beta}$
		\RETURN $x_t$
		\ENDFOR
	\end{algorithmic}
\end{algorithm}

\begin{lem}
	The expected cost of algorithm $\text{gCHASE}_s^r$ is equal to the cost of algorithm $\text{cCHASE}_{s}^{r}$ given the same input.
\end{lem}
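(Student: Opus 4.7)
The plan is to prove the lemma by establishing an inductive invariant on both the marginal distribution and the transition structure of the random state sequence produced by $\text{gCHASE}_s^r$, and then matching expected costs term by term against the (deterministic) cost of $\text{cCHASE}$.

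First, I would prove by induction on $t$ the marginal invariant
\[
\Pr[s_t = 1] \;=\; x_t \;=\; \frac{\beta + \Delta(t)}{\beta}.
\]
The base case is immediate, since $s_0 = 0$ and $\Delta(0)=-\beta$, giving $x_0 = 0$. For the inductive step, I would split according to the three branches in Algorithm \ref{alg-gchasesr}: (i) when $\Delta(t) \in \{0,-\beta\}$, $s_t$ is set deterministically and the invariant is trivial; (ii) when $\Delta(t-1)\leq\Delta(t)$, only $0\to 1$ transitions are permitted, so $\Pr[s_t=1] = x_{t-1} + (1-x_{t-1})\cdot(1-\Delta(t)/\Delta(t-1))$; using the identity $1-x = -\Delta/\beta$ this simplifies to $x_t$; (iii) the decreasing case is symmetric, with only $1\to 0$ transitions permitted, and a similar simplification using $x = (\beta+\Delta)/\beta$ yields $\Pr[s_t=1]=x_t$.

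With the marginal invariant in hand, the operational cost matches easily: since $g_t(\cdot)$ takes only two values and $g(\sigma_t,\cdot)$ is its linear interpolation, by linearity of expectation
\[
\mathbb{E}[g_t(s_t)] \;=\; g_t(1)\cdot\Pr[s_t=1] + g_t(0)\cdot\Pr[s_t=0] \;=\; g(\sigma_t,x_t),
\]
which is exactly the operational cost in $\text{cCHASE}$.

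The main technical step is the switching cost. Because $(s_t-s_{t-1})^+$ is $1$ only on the event $\{s_{t-1}=0,\,s_t=1\}$, I need to show $\Pr[s_{t-1}=0,s_t=1] = (x_t-x_{t-1})^+$. Here the coupling built into $\text{gCHASE}_s^r$ is essential: in the non-decreasing branch ($x_t\geq x_{t-1}$), downward transitions $1\to 0$ are forbidden, so $\Pr[s_{t-1}=0,s_t=1]=(1-x_{t-1})\cdot(1-\Delta(t)/\Delta(t-1))=x_t-x_{t-1}$, matching $(x_t-x_{t-1})^+$; in the decreasing branch ($x_t<x_{t-1}$), upward transitions $0\to 1$ are forbidden, so $\Pr[s_{t-1}=0,s_t=1]=0=(x_t-x_{t-1})^+$. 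The boundary branches $\Delta(t)\in\{0,-\beta\}$ require separate but short verification, where the deterministic assignment of $s_t$ combined with $\Pr[s_{t-1}=1]=x_{t-1}$ gives the correct joint probability directly. Summing the per-slot equalities over $t$ and invoking linearity of expectation yields $\mathbb{E}[\text{Cost of }\text{gCHASE}_s^r] = \text{Cost of }\text{cCHASE}$.

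The hardest bookkeeping will be reconciling the boundary branches uniformly with the randomized branches so that both the marginal invariant $\Pr[s_t=1]=x_t$ \emph{and} the one-sided coupling (no downward transitions when $\Delta$ rises, no upward transitions when $\Delta$ falls) hold simultaneously for all $t$. Once these invariants are verified, the remainder of the proof is a routine term-by-term comparison.
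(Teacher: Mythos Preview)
Your proposal is correct and follows essentially the same three-step structure as the paper's proof: establish the marginal invariant $\Pr[s_t=1]=x_t$ by induction with a case split on the direction of $\Delta$, match operational costs by linearity, and match switching costs via the one-sided coupling built into the algorithm. If anything, your treatment of the switching-cost step is slightly more explicit than the paper's, which simply asserts $\mathbb{E}[(s_t-s_{t-1})^+]=(p_t-p_{t-1})^+$ without spelling out the joint-probability computation that you correctly identify as the key point.
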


\begin{proof}
	For algorithm $\text{gCHASE}_s^r$, let $p_t \triangleq \text{Pr}[s_t = 1]$. Therefore, $\text{Pr}[s_t = 0] = 1 - p_t$.
	
	\textsl{Step-1: } show that $p_t = x_t$. We prove by induction.
	
	For $t=0$, $p_t = x_t = 0$. If $p_{t-1} = x_{t-1}$ holds,
	
	\textsl{Case-1: } $\Delta(t-1) \leq \Delta(t)$
	\begin{align}
		\text{Pr}[s_t = 1] & = \text{Pr}[s_t = 1|s_{t-1} = 1] \cdot \text{Pr}[s_{t-1} = 1]\\
		& + \text{Pr}[s_t = 1|s_{t-1} = 0]\cdot \text{Pr}[s_{t-1} = 0]\\
		& = 1 \cdot p_{t-1} + (1-\frac{\Delta(t)}{\Delta(t-1)}) \cdot (1-p_{t-1})\\
		& = x_{t-1} + (1-\frac{\beta \cdot x_{t} - \beta}{\beta \cdot x_{t-1} - \beta}) \cdot (1-x_{t-1})\\
		& = x_{t}
	\end{align}
	
	\textsl{Case-2: } $\Delta(t-1) > \Delta(t)$
	\begin{align}
	\text{Pr}[s_t = 1] & = \text{Pr}[s_t = 1|s_{t-1} = 1] \cdot \text{Pr}[s_{t-1} = 1]\\
	& + \text{Pr}[s_t = 1|s_{t-1} = 0]\cdot \text{Pr}[s_{t-1} = 0]\\
	& = \frac{\beta + \Delta(t)}{\beta + \Delta(t-1)} \cdot p_{t-1} + 0\\
	& = \frac{\beta \cdot x_{t}}{\beta \cdot x_{t-1}} \cdot x_{t-1}\\
	& = x_{t}
	\end{align}
	
	\textsl{Step-2: } show that $\mathbb{E}[g_t(\sigma_t, s_t)] = g_t(\sigma_t, x_t)$.
	\begin{align}
		\mathbb{E}[g_t(\sigma_t, s_t)] & = p_t \cdot g_t(\sigma_t, 1) + (1-p_t) \cdot g_t(\sigma_t, 0)\\
		& = x_t \cdot g_t(\sigma_t, 1) + (1-x_t) \cdot g_t(\sigma_t, 0)\\
		& = g_t(\sigma_t, x_t)
	\end{align}
	
	\textsl{Step-3: } show that $\mathbb{E}[(s_t - s_{t-1})^+] = (x_t-x_{t-1})^+$.
	\begin{align}
		\mathbb{E}[(s_t - s_{t-1})^+] & = (p_t - p_{t-1})^+\\
		& = (x_t-x_{t-1})^+
	\end{align}
	
	By summing the above over $t\ \in \mathcal{T}$, we obtain ${\rm Cost}_{\text{gCHASE}_s^r}(\bm{\sigma}) = {\rm Cost}_{\text{cCHASE}_s^r}(\bm{\sigma})$.
\end{proof} 

\begin{lem}
	Algorithm $\text{cCHASE}_{s}^{r}$ is 2-competitive for problem \textbf{cSP}.
\end{lem}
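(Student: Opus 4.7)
My plan is to prove $2$-competitiveness of $\text{cCHASE}$ by an amortized (potential function) argument, following the style used for continuous one-dimensional online convex optimization with switching costs in \cite{disc-ext,tight}. Let $\{x_t\}_{t=0}^T$ denote the states produced by $\text{cCHASE}$ and $\{x_t^*\}_{t=0}^T$ denote an optimal offline sequence for \textbf{cSP}, with $x_0 = x_0^* = 0$. Using linearity of $g$ in its second argument, I can write $g(\sigma_t, x) = g(\sigma_t, 0) - \delta(t)\,x$, so that
\begin{equation*}
g(\sigma_t, x_t) - g(\sigma_t, x_t^*) = -\delta(t)\,(x_t - x_t^*).
\end{equation*}
Combined with the identity $x_t = 1 + \Delta(t)/\beta$ and the recursion $\Delta(t) = (\Delta(t-1) + \delta(t))_{-\beta}^{0}$, this lets me re-express $\text{cCHASE}$'s per-step service cost purely in terms of $\Delta(t)$ and $\Delta(t-1)$.

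Next, I would introduce a potential of the form $\Phi(t) \triangleq \beta\cdot\eta(x_t, x_t^*)$, with $\eta$ chosen so that (i) $\Phi(0) = 0$ and $\Phi(T) \ge 0$, and (ii) increases in $\Phi$ caused by the adversary's service cost are compensated by $\text{OPT}$'s operational cost, while increases caused by $\text{cCHASE}$'s upward motion are compensated by $\text{OPT}$'s switching cost. A natural candidate is $\eta(x,x^*) = (x^* - x)^+$, which charges only the ``deficit'' of $\text{cCHASE}$ relative to $\text{OPT}$; this matches the asymmetry of the $(\cdot)^+$ switching cost. I would then verify the per-step amortized inequality
\begin{equation*}
\text{ALG}_t + \bigl(\Phi(t) - \Phi(t-1)\bigr) \;\le\; 2\,\text{OPT}_t
\end{equation*}
by case analysis on the three regimes of $\Delta(t)$: the interior regime $\Delta(t) \in (-\beta, 0)$ (where $x_t - x_{t-1} = \delta(t)/\beta$), the upper boundary $\Delta(t) = 0$ (where $x_t = 1$, possibly with upward movement capped), and the lower boundary $\Delta(t) = -\beta$ (where $x_t = 0$ and no movement cost is incurred). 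In each case the linear structure of $g$ reduces the check to an elementary one-variable inequality.

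Finally, telescoping the per-step inequality over $t = 1, \dots, T$ and using $\Phi(0) = 0 \le \Phi(T)$ yields
\begin{equation*}
\sum_{t=1}^T \text{ALG}_t \;\le\; 2\sum_{t=1}^T \text{OPT}_t,
\end{equation*}
which is exactly the desired $2$-competitiveness.

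The main obstacle I anticipate is selecting the potential $\eta$ so that all boundary cases simultaneously satisfy the amortized inequality; in particular the clipping of $\Delta(t)$ at $-\beta$ and $0$ creates subtleties, since at those moments $\text{cCHASE}$ absorbs (or discards) ``pressure'' from $\delta(t)$ without a corresponding change in $x_t$, and the potential must be calibrated so that $\text{OPT}$'s response (up, down, or stationary) still pays for any resulting jump in $\Phi$. If the first candidate $\eta(x,x^*) = (x^*-x)^+$ fails in one of the boundary subcases, I would symmetrize it to $\eta(x,x^*) = \tfrac{1}{2}|x-x^*|$ or use a piecewise definition that accounts for whether $\Delta(t)$ is currently clipped, following the template of \cite{disc-ext}.
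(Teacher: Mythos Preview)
Your high-level plan—a potential-function (amortized) argument followed by telescoping—matches the paper's approach exactly. The gap is in the specific potential. Neither $\eta(x,x^*) = (x^*-x)^+$ nor $\eta(x,x^*) = \tfrac12|x-x^*|$ (nor any piecewise-linear function of $x^*-x$) can close the per-step inequality, and the failure is \emph{not} confined to the boundary subcases you anticipate: it already occurs in the interior regime. Concretely, take $\delta(t)>0$, so that $f_t(x)\triangleq g(\sigma_t,x)-g(\sigma_t,1)=\delta(t)(1-x)$, assume the interior update $x_t = x_{t-1}+\delta(t)/\beta$, and let the optimum sit at $z_t=z_{t-1}=1$. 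Then the algorithm's step cost is $f_t(x_t)+\beta(x_t-x_{t-1})=\delta(t)(2-x_t)$ while the optimum's is $0$, so the amortized inequality demands $\Delta\Phi \le \delta(t)(x_t-2)$. With $\Phi=\beta(z-x)^+$ one gets $\Delta\Phi=-\delta(t)$, forcing $x_t\ge 1$, which is false in the interior. Scaling to $\Phi=2\beta(z-x)^+$ repairs this case but then fails symmetrically when $z_t=0$, $x_t>0$, and $\delta(t)<0$: there $\Delta\Phi=0$ while the inequality needs $\Delta\Phi\le -|\delta(t)|\,x_t<0$. The structural reason is that $\text{cCHASE}$'s own service cost contains the term $-\delta(t)\,x_t$, and since $x_t-x_{t-1}=\delta(t)/\beta$ in the interior this produces a contribution proportional to $x_t(x_t-x_{t-1})$, which only a potential with a quadratic term in $x_t$ can absorb.

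The paper uses
\[
\Phi(x_t,z_t)\;=\;\beta\Bigl(\tfrac12 x_t^2 + 2z_t - 2z_t x_t\Bigr)\;=\;\beta\Bigl(\tfrac12 x_t^2 + 2z_t(1-x_t)\Bigr),
\]
quadratic in the algorithm's state and linear in the optimum's. It also employs the standard \emph{interleaving-moves} device (decompose each step so that only one of $x_t,z_t$ changes), and then splits cases on whether $f_t=f_t^{<}$ or $f_t=f_t^{>}$ together with whether $x_t$ hits a boundary. Your case split on the regime of $\Delta(t)$ is compatible with this, but you need both the $\tfrac12 x_t^2$ term and the interleaving decomposition; without the former, no piecewise-linear $\eta$ will carry the argument.
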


\begin{proof}
	We denote $\text{cOPT}_{s}$ as the optimal offline algorithm for problem \textbf{cSP}, and the state it chooses is $z_t$ at time $t$. We make the following simplifications and assumptions: define 
	\begin{align}
	f_t(x) \triangleq g_t(\sigma_t, x) - \min\Big\{g_t(\sigma_t, 0) , g_t(\sigma_t, 1)\Big\} , \forall t \in \mathcal{T}
	\end{align}
	and category them to two kinds of functions, namely
	\begin{subequations}
	\begin{numcases}{f^<_t(x) = }
		0, & $x = 0$\\
		x \cdot f_t(1), & $x \in (0,1)$\\
		f_t(1) \geq 0, & $x = 1$
	\end{numcases}
	\end{subequations}
	and
	\begin{subequations}
	\begin{numcases}{f^>_t(x) = }
	f_t(0) > 0, & $x = 0$\\
	(1-x) \cdot f_t(0), & $x \in (0,1)$\\
	0, & $x = 1$
	\end{numcases}
	\end{subequations}
	for any $t\in \mathcal{T}$. 
	
	Further, by combining definitions of $\Delta(t)$, $f_t(x)$ and $x_t$, we observe that
	\begin{align}
	x_t = \Big(x_{t-1} + \frac{f_t(0)-f_t(1)}{\beta}\Big)^1_0 , \forall t \in \mathcal{T}. \label{equ:xt}
	\end{align}
	
	Therefore,
	\begin{subequations}
	\begin{numcases}{}
	x_t \leq x_{t-1}, \quad f_t(x) = f^<_t(x) \label{equ:x<}\\
	x_t \geq x_{t-1}, \quad f_t(x) = f^>_t(x) \label{equ:x>}
	\end{numcases}
	\end{subequations}
		
	We use the potential function argument which needs to show the following
	\begin{subequations}
	\begin{numcases}{}
	f_t(x_t) + \beta \cdot (x_t-x_{t-1})^+ +  \Phi(x_t,z_t) - \Phi(x_{t-1},z_{t-1}) \notag\\
	\qquad \qquad \leq 2 \cdot \Big( f_t(z_t) + \beta \cdot (z_t-z_{t-1})^+ \Big) , \forall t \in \mathcal{T} \label{equ:x2z}\\
    \Phi(x_t,z_t) \geq 0 , \forall t \in \mathcal{T} \label{equ:phi}
	\end{numcases}
	\end{subequations}
    where    
    \begin{align}
    \Phi(x_t,z_t) \triangleq \beta \cdot (\frac{1}{2}x_t^2 + 2z_t - 2z_t\cdot x_t) , \forall t \in \mathcal{T}.
    \end{align}
 
    \textsl{Step-1: } show that (\ref{equ:phi}) holds. 
    \begin{align}
    \Phi(x_t,z_t) = \beta \cdot \Big[2z_t(1-x_t)+\frac{1}{2}x_t^2\Big]\geq 0, \forall t \in \mathcal{T}.
    \end{align}
    
    \textsl{Step-2: } show that (\ref{equ:x2z}) holds.
    
    We use the idea of interleaving moves \cite{BEY05online}. Suppose at each time $t$, only $\text{cOPT}_{s}$ or $\text{cCHASE}_{s}^{r}$ changes its state. If both of them change their states, the input function can be decomposed into two stages, and feed them one after another so that two algorithms change states sequentially. 
    
    \textsl{Case-1: $x_t = x_{t-1}$, $z_t \neq z_{t-1}$}.
    
    \textsl{Case-1a: $f_t(x_t) = f^<_t(x_t)$ }. By (\ref{equ:xt}) and (\ref{equ:x<}), $x_t = 0$.
    \begin{align}
    LHS & = f_t(0) + \beta \cdot (0-0)^+ +  \Phi(0,z_t) - \Phi(0,z_{t-1})\\
    & = 0 + 0 + \beta \cdot (2z_t) - \beta \cdot (2z_{t-1})\\
    & = 2\beta (z_t - z_{t-1}) \leq RHS    
    \end{align}
    
    \textsl{Case-1b: $f_t(x_t) = f^>_t(x_t)$}. By (\ref{equ:xt}) and (\ref{equ:x>}), $x_t = 1$.
    \begin{align}
    LHS & = f_t(1) + \beta \cdot (1-1)^+ +  \Phi(1,z_t) - \Phi(1,z_{t-1})\\
    & = 0 + 0 + \beta \cdot (2z_t - 2z_t) - \beta \cdot (2z_{t-1} - 2z_{t-1})\\
    & = 0 \leq RHS    
    \end{align}
    
    \textsl{Case-2: $z_t = z_{t-1}$, $x_t \neq x_{t-1}$}.
    
    \textsl{Case-2a: $f_t(x_t) = f^<_t(x_t)$}.
    
    \textsl{Case-2a$^{\uppercase\expandafter{\romannumeral1}}$: $x_t \neq 0$}. By (\ref{equ:xt}), $x_t = x_{t-1} - \frac{f_t(1)}{\beta}$.
    \begin{align}
    LHS - RHS = & f_t(x_t) + 0 + \Phi(x_t,z_t) - \Phi(x_{t-1},z_t) - 2f_t(z_t)\\
    = & x_t\cdot f_t(1) + \beta \cdot (\frac{1}{2}x_t^2 - 2z_t\cdot x_t) \notag\\
    & - \beta \cdot (\frac{1}{2}x_{t-1}^2 - 2z_t\cdot x_{t-1}) - 2 z_t\cdot f_t(1)\\
    = & - \frac{1}{2}\beta\cdot (x_t - x_{t-1})^2 \leq 0
    \end{align}
    
    \textsl{Case-2a$^{\uppercase\expandafter{\romannumeral2}}$: $x_t = 0$}. By (\ref{equ:xt}), $x_{t-1} -\frac{f_t(1)}{\beta} \leq 0$.
    \begin{align}
    LHS - RHS = & f_t(0) + 0 + \Phi(0,z_t) - \Phi(x_{t-1},z_t) - 2f_t(z_t)\\
    = & 0 - \beta \cdot (\frac{1}{2}x_{t-1}^2 - 2z_t\cdot x_{t-1}) - 2 z_t\cdot f_t(1)\\
    \leq & - \frac{1}{2}\beta\cdot x_{t-1}^2 + 2\beta\cdot z_t\cdot x_{t-1} - 2\beta \cdot z_t\cdot  x_{t-1} \\
    = & - \frac{1}{2}\beta\cdot x_{t-1}^2 \leq 0
    \end{align}
    
    \textsl{Case-2b: $f_t(x_t) = f^>_t(x_t)$}.
    
    \textsl{Case-2b$^{\uppercase\expandafter{\romannumeral1}}$: $x_t \neq 1$}. By (\ref{equ:xt}), $x_t = x_{t-1} + \frac{f_t(0)}{\beta}$.
    \begin{align}
    LHS - RHS = & f_t(x_t) + \beta\cdot\frac{f_t(0)}{\beta} + \Phi(x_t,z_t) \notag\\
    & - \Phi(x_{t-1},z_t) - 2f_t(z_t)\\
    = & (1 - x_t)\cdot f_t(0) + f_t(0) +\beta \cdot (\frac{1}{2}x_t^2 - 2z_t\cdot x_t) \notag\\
    & - \beta \cdot (\frac{1}{2}x_{t-1}^2 - 2z_t\cdot x_{t-1}) - 2 (1 - z_t)\cdot f_t(0)\\
    = & - \frac{1}{2}\beta\cdot (x_t - x_{t-1})^2< 0
    \end{align}
    
    \textsl{Case-2b$^{\uppercase\expandafter{\romannumeral2}}$: $x_t = 1$}. By (\ref{equ:xt}), $x_{t-1}   + \frac{f_t(0)}{\beta} \geq 1$.
    \begin{align}
    LHS - RHS = & f_t(1) + \beta \cdot (1-x_{t-1}) + \Phi(1,z_t) \notag\\
    & - \Phi(x_{t-1},z_t) - 2f_t(z_t)\\
    = & 0 + \beta \cdot (1-x_{t-1}) + \beta \cdot (\frac{1}{2} - 2z_{t-1}) \notag\\
    & - \beta \cdot (\frac{1}{2}x_{t-1}^2 - 2z_t\cdot x_{t-1}) - 2 (1 - z_t)\cdot f_t(0)\\
    \leq & \beta \cdot (\frac{3}{2} - x_{t-1} - 2z_{t} - \frac{1}{2}x_{t-1}^2 + 2z_t\cdot x_{t-1}) \notag\\
    & - 2 (1 - z_t)\cdot (\beta + \beta \cdot x_{t-1})\\
    = & - \frac{1}{2}\beta\cdot (x_{t-1} - 1)^2 \leq 0
    \end{align}
    
    To show the competitive ratio of $\text{cCHASE}_{s}^{r}$ is not smaller than ($2-\epsilon$), for any $\epsilon > 0$, we construct a special input sequence:
    \begin{subequations}
    \begin{numcases}{}
    	g_1(\sigma_1,0) = \delta\beta , g_1(\sigma_1,1) = 0\\
    	g_t(\sigma_1,1) = \delta\beta , g_t(\sigma_1,0) = 0, \forall t \in \mathcal{T} \backslash \{1\}
    \end{numcases}
    \end{subequations}
    where $ \delta $ is a positive real constant smaller than 1 and $\delta \rightarrow 0$.
    
    The state vector of $\text{cCHASE}_{s}^{r}$ is:
    \begin{subequations}
    \begin{numcases}{x_t = }
    \delta , t = 1\\
    0 , \forall t \in \mathcal{T} \backslash \{1\}
    \end{numcases}
    \end{subequations}
    
    The cost of $\text{cCHASE}_{s}^{r}$ is:
    \begin{align}
    	{\rm Cost}_{\text{cCHASE}_{s}^{r}} & = \beta \cdot \delta + (1-\delta)\cdot \delta\beta + 0\\
    	& = 2\delta\beta - \delta^2 \beta
    \end{align}
    
    We claim that the state vector of $\text{cOPT}_{s}$ is $z_t = 0$ for all $t \in \mathcal{T}$, and its cost is $\delta\beta$. In this case, the cost ratio between the two algorithms is:
    \begin{align}
    \frac{{\rm Cost}_{\text{cCHASE}_{s}^{r}}}{{\rm Cost}_{\text{cOPT}_{s}}} & = \frac{2\delta\beta - \delta^2 \beta}{\delta\beta}\\
    & = 2 - \delta \rightarrow 2 - 0 = 2
    \end{align}
\end{proof}

Last but not least, we need to verify that ${\rm Cost}_{\text{cOPT}_{s}} \leq {\rm Cost}_{\text{OFA}_s}$ always holds. As in the decision space in continuous version include the one in discrete space. Therefore ${\rm Cost}_{\text{OFA}_s}$ must be a lower bound to the continuous setting.

In conclusion, we show that
\begin{align}
    \mathbb{E}[{\rm Cost}_{\text{gCHASE}_s^r}] = {\rm Cost}_{\text{cCHASE}_{s}^{r}} \leq 2{\rm Cost}_{\text{cOPT}_{s}} \leq 2{\rm Cost}_{\text{OFA}_s}
\end{align}
which completes the proof.
\qed

\subsection{Discussion on Linearly Decreasing Cancellation Fee}
\label{pro:conj}
We first show that problem \textbf{SP} can be reformulated in a similar way as \textbf{dSP}, then claim two problems only differs from a constant $\alpha$. 

Define
\begin{align}
\phi(t) \triangleq \sum_{\tau = 0}^{t - 1} \delta(\tau) 
\end{align}

By using time segments defined by (\ref{equ:segment}), we obtain
\begin{align}
	& \sum_{t=1}^{T}\Big(g_t(s_t)+\beta\cdot (s_t-s_{t-1})^+\Big)\\
	= & \sum_{i = 0}^{n} \sum_{t = T_{2i}}^{T_{2i + 1} - 1} g_t(0) + \sum_{i = 1}^{n} \sum_{t = T_{2i - 1}}^{T_{2i} - 1} g_t(1) + \beta \cdot n\\
	= & \sum_{t=1}^{T} g_t(1) + \sum_{i = 0}^{n} \sum_{t = T_{2i}}^{T_{2i + 1} - 1} \delta_t(0) + \beta \cdot n\\
	= & \sum_{t=1}^{T} g_t(1) - \beta + \sum_{i = 0}^{n} \Big( \phi(T_{2i + 1} - 1) - \phi(T_{2i} - 1) + \beta \Big)
\end{align}

Similarly, for problem \textbf{dSP}, we define
\begin{align}
\Phi(t) \triangleq \sum_{\tau = 0}^{t - 1} (\delta(\tau) - \alpha)
\end{align}

\begin{align}
& \sum_{t=1}^{T} g_t(s_t) + \sum_{i = 0}^{n}\Big(\alpha\cdot [\text{\sffamily L} - (T_{2i + 1} - T_{2i})]\Big)\\
= & \sum_{t=1}^{T} g_t(1) + \sum_{i = 0}^{n} \sum_{t = T_{2i}}^{T_{2i + 1} - 1} \delta_t(0) + \beta \cdot n - \alpha \cdot \sum_{i = 0}^{n}(T_{2i + 1} - T_{2i})\\
= & \sum_{t=1}^{T} g_t(1) - \beta + \sum_{i = 0}^{n} \Big( \Phi(T_{2i + 1} - 1) - \Phi(T_{2i} - 1) + \beta \Big)
\end{align}

By regarding $\delta(\tau) - \alpha$ as new input, we claim that \textbf{SP} differs from \textbf{dSP} by only a constant. Therefore, the previous algorithms might apply in this case.

\end{document}